\documentclass[sigconf,nonacm]{acmart}
\AtBeginDocument{%
  }
\usepackage{url}

\usepackage{breakurl}
\usepackage{hyperref}
\usepackage{subcaption}
\usepackage{textcomp}
\usepackage{xcolor}
\usepackage{glossaries}
\usepackage{svg}
\usepackage{algorithm}
\usepackage{multirow}
\newacronym{pbs}{PBS}{Proposer-Builder Separation}
\newacronym{ofc}{OFC}{Order Flow Composition}
\newacronym{bft}{BFT}{Byzantine Fault Tolerant}
\newacronym{pos}{PoS}{Proof-of-Stake}
\newacronym{pow}{PoW}{Proof-of-Work}
\newacronym{mev}{MEV}{Maximal Extractable Value}
\newacronym[shortplural=DEXes]{dex}{DEX}{Decentralized Exchange}
\newacronym[shortplural=CEXes]{cex}{CEX}{Centralized Exchange} 
\newacronym{amm}{AMM}{Automated Market Maker}
\newacronym{defi}{DeFi}{Decentralized Finance}
\newacronym{tradfi}{TradFi}{Traditional Finance}
\newacronym{pga}{PGA}{Priority Gas Auction}
\newacronym{tob}{ToB}{Top-of-Block}
\newacronym{bob}{BoB}{Body-of-Block}
\newacronym{eob}{EoB}{End-of-Block}
\newacronym{ofa}{OFA}{Order Flow Auction}
\newacronym{0xce91228789b57deb45e66ca10ff648385fe7093b}{0xCe91228789B57DEb45e66Ca10Ff648385fE7093b}{MEV Blocker Rebates Safe}
\newacronym{rfq}{RFQ}{Request For Quote}
\newacronym{pm}{PM}{Profit Margin}
\newacronym{eoa}{EOA}{Externally Owned Account}
\newacronym{eof}{EOF}{Exclusive Order Flow}
\newacronym{msof}{MSOF}{Most Significant Order Flow}
\newacronym{jit}{JIT}{Just-In-Time}
\newacronym{lda}{LDA}{Linear Discriminant Analysis}
\newacronym{da}{DA}{Decoding Accuracy}
\newacronym{rpc}{RPC}{Remote Procedure Call}
\newacronym{brt}{BRT}{\texttt{beaverbuild}, \texttt{rysnc}, and \texttt{Titan}}
\newacronym{ep}{EP}{Exclusive Provider}
\newacronym{tee}{TEE}{Trusted Execution Environment}
\newacronym{il}{IL}{Inclusion List}
\newacronym{aps}{APS}{Attestor-Proposer Separation}
\newacronym{ea}{EA}{Execution Auction}
\newacronym{cr}{CR}{Censorship-Resistance}
\newacronym{hft}{HFT}{High-Frequency Trading}
\newacronym{l2}{L2}{Layer-2}
\newacronym{l1}{L1}{Layer-1}
\newacronym{bsc}{BSC}{Binance Smart Chain}
\newacronym{zk}{ZK}{Zero Knowledge}
\newacronym{tvl}{TVL}{Total Value Locked}
\newacronym{sia}{SIA}{Sequence-Independent Arbitrage}
\newacronym{sda}{SDA}{Sequence-Dependent Arbitrage}
\newacronym{cdf}{CDF}{Cumulative Distribution Function}
\newacronym{ci}{CI}{Confidence Interval}
\newacronym{fp}{FP}{False Positive}
\newacronym{fn}{FN}{False Negative}
\newacronym{clob}{CLOB}{Central Limit Order Book}
\newacronym{p2p}{P2P}{Peer-to-Peer}
\newacronym{fdv}{FDV}{Fully Diluted Valuation}
\newacronym{cpmm}{CPMM}{Constant Product Market Maker}
\newacronym{fcfs}{FCFS}{First-Come, First-Served}
\newacronym{hhi}{HHI}{Herfindahl–Hirschman Index}
\newcommand{\encircled}[2][0.9mm]{%
    \raisebox{.8pt}{%
        \textcircled{%
            \raisebox{0.35pt}{%
                \kern #1
                \scalebox{0.70}{#2}
            }%
        }%
    }%
}
\makeatletter
\newcommand*{\ensquared}[1]{\relax\ifmmode\mathpalette\@ensquared@math{#1}\else\@ensquared{#1}\fi}
\newcommand*{\@ensquared@math}[2]{\@ensquared{$\m@th#1#2$}}
\newcommand*{\@ensquared}[1]{%
\tikz[baseline,anchor=base]{\node[draw,outer sep=0pt,inner sep=0.6mm,minimum width=3.8mm] {#1};}} 
\makeatother

\usepackage{graphicx}
\usepackage{booktabs,arydshln}
\usepackage{pdflscape}
\usepackage{subcaption}
\usepackage{ragged2e}
\usepackage{algorithm}
\usepackage{algpseudocode}
\usepackage{cleveref}
\usepackage{siunitx}
\usepackage{tikz}
\usepackage{enumitem}
\usepackage{adjustbox}
\usepackage{balance}
\sisetup{group-separator={,},group-minimum-digits=4}

\makeatletter
\def\adl@drawiv#1#2#3{%
        \hskip.5\tabcolsep
        \xleaders#3{#2.5\@tempdimb #1{1}#2.5\@tempdimb}%
                #2\z@ plus1fil minus1fil\relax
        \hskip.5\tabcolsep}
\newcommand{\cdashlinelr}[1]{%
  \noalign{\vskip\aboverulesep
           \global\let\@dashdrawstore\adl@draw
           \global\let\adl@draw\adl@drawiv}
  \cdashline{#1}
  \noalign{\global\let\adl@draw\@dashdrawstore
           \vskip\belowrulesep}}
\makeatother
\glsdisablehyper

\setcopyright{acmlicensed}
\copyrightyear{2018}
\acmYear{2018}
\acmDOI{XXXXXXX.XXXXXXX}
\acmConference[Conference acronym 'XX]{Make sure to enter the correct
  conference title from your rights confirmation email}{June 03--05,
  2018}{Woodstock, NY}
\acmISBN{978-1-4503-XXXX-X/2018/06}




\begin{document}

\title{Cross-Chain Arbitrage: The Next Frontier of MEV in Decentralized Finance}

\author{Burak Öz}
\affiliation{%
  \institution{Technical University of Munich \& Flashbots}
  \city{Munich}
  \country{Germany}}
\email{burak.oez@tum.de}

\author{Christof Ferreira Torres}
\affiliation{%
  \institution{Instituto Superior Técnico \& University of Lisbon \& INESC-ID}
  \city{Lisbon}
  \country{Portugal}}
\email{christof.torres@tecnico.ulisboa.pt}

\author{Christoph Schlegel}
\affiliation{%
  \institution{Flashbots}
  \city{Zurich}
  \country{Switzerland}}
\email{christoph@flashbots.net}

\author{Bruno Mazorra}
\affiliation{%
  \institution{Flashbots}
  \city{Barcelona}
  \country{Spain}}
\email{bruno@flashbots.net}

\author{Jonas Gebele}
\affiliation{%
  \institution{Technical University of Munich}
  \city{Munich}
  \country{Germany}}
\email{jonas.gebele@tum.de}

\author{Filip Rezabek}
\affiliation{%
  \institution{Technical University of Munich}
  \city{Munich}
  \country{Germany}}
\email{rezabek@net.in.tum.de}

\author{Florian Matthes}
\affiliation{%
  \institution{Technical University of Munich}
  \city{Munich}
  \country{Germany}}
\email{matthes@tum.de}

\renewcommand{\shortauthors}{Öz et al.}

\begin{abstract}
Decentralized finance (DeFi) markets spread across Layer-1 (L1) and Layer-2 (L2) blockchains rely on arbitrage to keep prices aligned. Today most price gaps are closed against centralized exchanges (CEXes), whose deep liquidity and fast execution make them the primary venue for price discovery. As trading volume migrates on-chain, cross-chain arbitrage between decentralized exchanges (DEXes) will become the canonical mechanism for price alignment. Yet, despite its importance to DeFi—and the on-chain transparency making real activity tractable in a way CEX-to-DEX arbitrage is not—existing research remains confined to conceptual overviews and hypothetical opportunity analyses.

We study cross-chain arbitrage with a profit-cost model and a year-long measurement. The model shows that opportunity frequency, bridging time, and token depreciation determine whether inventory- or bridge-based execution is more profitable. Empirically, we analyze one year of transactions (September 2023 - August 2024) across nine blockchains and identify 242,535 executed arbitrages totaling 868.64 million USD volume. Activity clusters on Ethereum-centric L1-L2 pairs, grows 5.5x over the study period, and surges—higher volume, more trades, lower fees—after the Dencun upgrade (March 13, 2024). Most trades use pre-positioned inventory (66.96\%) and settle in 9s, whereas bridge-based arbitrages take 242s, underscoring the latency cost of today's bridges. Market concentration is high: the five largest addresses execute more than half of all trades, and one alone captures almost 40\% of daily volume post-Dencun. We conclude that cross-chain arbitrage fosters vertical integration, centralizing sequencing infrastructure and economic power and thereby exacerbating censorship, liveness, and finality risks; decentralizing block building and lowering entry barriers are critical to countering these threats.
\end{abstract}

%
\begin{CCSXML}
<ccs2012>
 <concept>
  <concept_id>00000000.0000000.0000000</concept_id>
  <concept_desc>Do Not Use This Code, Generate the Correct Terms for Your Paper</concept_desc>
  <concept_significance>500</concept_significance>
 </concept>
 <concept>
  <concept_id>00000000.00000000.00000000</concept_id>
  <concept_desc>Do Not Use This Code, Generate the Correct Terms for Your Paper</concept_desc>
  <concept_significance>300</concept_significance>
 </concept>
 <concept>
  <concept_id>00000000.00000000.00000000</concept_id>
  <concept_desc>Do Not Use This Code, Generate the Correct Terms for Your Paper</concept_desc>
  <concept_significance>100</concept_significance>
 </concept>
 <concept>
  <concept_id>00000000.00000000.00000000</concept_id>
  <concept_desc>Do Not Use This Code, Generate the Correct Terms for Your Paper</concept_desc>
  <concept_significance>100</concept_significance>
 </concept>
</ccs2012>
\end{CCSXML}


%
\keywords{Cross-Chain Arbitrage, Decentralized Finance, Maximal Extractable Value, Blockchain Interoperability}

    
\maketitle

\section{Introduction}
The blockchain ecosystem is inherently multi-chain. \gls{l1} networks such as Ethereum supply security and decentralization, while a growing number of \glspl{l2} deliver cheaper, higher-throughput execution. Nearly every chain now hosts its own \gls{defi} markets, collectively processing several billion USD in daily trading volume \cite{coingecko}. This market fragmentation, however, means prices often diverge across chains. Restoring parity—and thus market efficiency—relies on arbitrage, a major form of \gls{mev} \cite{daian_flash_2020} in which traders buy low on one market and sell high on another.

Today, most price gaps are closed by arbitraging on-chain \glspl{dex} against off-chain \glspl{cex} \cite{heimbach_non-atomic_2024}, whose deep liquidity, low fees, and fast execution—advantages afforded by their centralized infrastructure—make them the primary venue for price discovery. As blockchain execution improves, \gls{defi} adoption grows, and long-tail tokens (which can be issued permissionlessly on-chain) remain unavailable on \glspl{cex}, trading volume is expected to shift to \glspl{dex}—a long standing goal of the \gls{defi} industry. In that on-chain future, \emph{cross-chain \gls{dex}-to-\gls{dex} arbitrage} will be the canonical mechanism for price alignment.

Cross-chain arbitrage can be executed in two main ways: \textbf{(i)} by keeping inventory on multiple chains or \textbf{(ii)} by moving assets through a bridge. Holding inventory ties up capital and exposes the trader to price swings on each chain but allows near-instant execution when an opportunity arises. Bridging avoids those inventory risks yet incurs transfer delays, exposing the opportunity to competitors who can act first or to routine trading activity that can close the price gap. Arbitrageurs must therefore choose a method based on the pairs they target and the bridges available. \Cref{fig:arb-strats} illustrates real-world examples of both methods (details in \Cref{examples}).

\begin{figure*}[t!]
    \centering
    \begin{subfigure}[t]{0.48\textwidth}
        \centering
        \includegraphics[width=\linewidth]{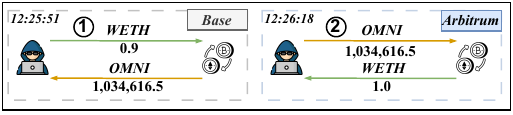}
        \caption{Inventory Arbitrage.}
        \label{fig:genaral-inv-arb}
    \end{subfigure}
    \hfill
    \begin{subfigure}[t]{0.48\textwidth}
        \centering
        \includegraphics[width=\linewidth]{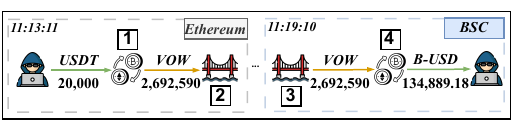}
        \caption{Bridge Arbitrage.}
        \label{fig:genaral-bridge-arb}
    \end{subfigure}
    \caption{Cross-chain arbitrage execution methods.  
    (\subref{fig:genaral-inv-arb}) Inventory arbitrage: the trader holds capital on both chains, so both legs can settle almost simultaneously. 
    (\subref{fig:genaral-bridge-arb}) Bridge arbitrage: the trader moves assets across chains between legs, and the bridge delay dominates total settlement time.
    Full transaction details appear in \Cref{examples}.}
    \label{fig:arb-strats}
\end{figure*}


Despite its importance to \gls{defi}—and the on-chain transparency that lets us track dynamics invisible in \gls{cex}-to-\gls{dex} arbitrage—cross-chain arbitrage remains under-explored. While single-domain \gls{mev} is well documented \cite{daian_flash_2020,qin2022quantifying,yang_countermeasures,schwarz-schilling_time_2023,oz_time_2023, rollinginshadows_torres, frontrunner_jones_torres_2021}, work on cross-chain arbitrage, a subset of cross-domain \gls{mev}, is sparse. Conceptual overviews of cross-domain \gls{mev} lack a domain-specific model that contrasts inventory and bridge arbitrage \cite{obadia_unity_2021, mcmenamin2023sokcrossdomainmev}. Empirical work has either examined \gls{cex}-to-\gls{dex} arbitrage by observing only the on-chain leg \cite{heimbach_non-atomic_2024} or studied hypothetical, inventory-based cross-chain opportunities limited to a few market pairs \cite{gogol_cross-rollup_2024, mazor_empirical_2023}.

We close this gap by combining a profit-cost model for cross-chain arbitrage—which determines when to bridge versus hold inventory—with a year-long empirical study on executed arbitrages across nine blockchains. 

\noindent
\textbf{Contributions}. We make the following contributions:
\begin{itemize}
   \item We present a profit-cost model that compares inventory and bridge cross-chain arbitrage, accounting for the costs of non-instantaneous bridging and holding inventory. The model solution reveals opportunity frequency, bridging time, and token depreciation as key determinants.
   \item We develop a general methodology for (i) detecting executed cross-chain arbitrages and (ii) classifying the bridge transactions that link their two legs.
  \item We conduct the first large-scale study of executed cross-chain arbitrages—\SI{242535}{} trades totaling \SI{868.64}{million} USD volume across nine blockchains—revealing activity concentrated on Ethereum-centric \gls{l1}-\gls{l2} pairs, growing by 5.5$\times$ over the study period and spiking (higher volume, more trades, lower fees) after the Dencun upgrade (March 13, 2024).
  \item We show that most arbitrage (\SI{66.96}{\%}) relies on pre-positioned inventory rather than bridges, settling in a \SI{9}{\second} median, whereas bridge-based trades take \SI{242}{\second}, highlighting the latency cost of today's bridges.
  \item We uncover rising market concentration: five largest addresses generate over half of all trades, and one alone (\texttt{0xCA74}) captures up to \SI{40}{\%} of daily volume in the post-Dencun period.
  \item We point that cross-chain arbitrage encourages vertical integration, concentrating sequencing infrastructure and economic power and thereby heightening censorship, liveness, and finality risks; we outline decentralizing block building and lowering entry barriers as countermeasures.
\end{itemize}

\section{Background}
This section provides relevant background on \gls{defi}, \gls{mev}, \gls{l1} and scaling solutions, and blockchain interoperability and bridges. 

\subsection{Decentralized Finance}
\gls{defi} offers similar primitives as \gls{tradfi} but extends it via additional solutions that are solely enabled by blockchain technology.
These include \glspl{dex}, lending platforms, options markets, and tokenized assets, all implemented via smart contracts. 
While \glspl{cex} use a \gls{clob} model, which relies on a centralized entity for order matching and settlement, \glspl{dex} typically adopt \glspl{amm} to facilitate \gls{p2p} trading, although the \gls{clob} model has also been used historically (e.g., EtherDelta). 

\glspl{amm} rely on liquidity pools and deterministic pricing rules to execute trades without intermediaries. A widely used design is the \gls{cpmm} adopted by Uniswap V2 \cite{univ2}: for pool reserves \(x\) and \(y\), all swaps must satisfy \(x \cdot y = k\). A trade moves the reserves along this curve, setting a new price, while liquidity
deposits or withdrawals change the curve (and hence \(k\)) without altering the instantaneous price.

Trading on \glspl{dex} therefore introduces price slippage—both expected, due to trade volume and liquidity constraints, and unexpected, due to execution delays and market volatility. Either form can push \gls{cpmm} prices out of line with other pools or venues, creating the arbitrage opportunities that traders exploit.

\subsection{Maximal Extractable Value}
The decentralized infrastructure enables opportunities for \gls{mev} extraction.
\gls{mev} relies on two key transaction ordering primitives: frontrunning, where an extractor ensures their transaction precedes a target transaction (\(T_{Target}\)), and backrunning, where the extractor's transaction executes immediately after \(T_{Target}\). 
Common \gls{mev} strategies include arbitrage, liquidation, and sandwiching. 
Arbitrage exploits price discrepancies across exchanges by analyzing blockchain state changes, ensuring price alignment across \glspl{dex}. 
Liquidations involve repaying a debt to purchase discounted collateral, often focusing on fixed-discount opportunities that can be executed in a single transaction. 
Both arbitrage and liquidation are generally considered beneficial for market efficiency. 
In contrast, sandwiching is a manipulative strategy where an adversarial transaction wraps a target trade (\(T_V\)), buying the asset beforehand to profit from the price increase caused by \(T_V\), and subsequently selling at a higher price. 
This practice disrupts fair-price execution and is considered harmful. 

Competition among \gls{mev} extractors may cause block congestion and gas fee inflation. 
\gls{mev} can also result in systemic instability by incentivizing behaviors prioritizing individual gain over network security. 
In the context of blockchain, this can lead to consensus instabilities, as miners or validators may be incentivized to steal transactions from others by creating forks, thereby undermining the system's integrity. 
Among the various strategies for \gls{mev} extraction, arbitrage
is the most prevalent, as demonstrated in measurement studies in \cite{daian_flash_2020,qin2022quantifying, rollinginshadows_torres}.

While \gls{mev} is typically studied on a single chain, value can also be extracted across domains.  A domain is any system with a shared mutable state—\gls{l1} and \gls{l2} blockchains, \glspl{cex}.  \cite{obadia_unity_2021} formalize cross-domain \gls{mev} as the maximum cumulative balance increase a user can achieve by controlling transaction sequencing across several such domains. In this work we focus on one concrete subset: executed \gls{dex}-to-\gls{dex} arbitrage that spans multiple blockchains.

\subsection{Layer-1 and Scaling Solutions}
\gls{l1} refers to the classical blockchain model where transactions are recorded on a public, immutable, and trustless ledger which follows a consensus algorithm to determine block creation. 
Examples of popular \glspl{l1} include Ethereum \cite{wood2014ethereum}, Binance Smart Chain \cite{bnbchain2025Jan}, Avalanche \cite{rocket2020scalableprobabilisticleaderlessbft}, etc. 
However, \glspl{l1} can face scalability issues in terms of throughput and transaction fees. 
As a result, a number of so-called \gls{l2} scaling solutions have emerged, where the most prominent solutions are either based on sidechains (e.g., Polygon \cite{polygon}) or commit chains (e.g., rollups).
Commit chains or rollups can be further split into either optimistic (e.g., Arbitrum \cite{arbitrum-docs}, Base \cite{base-docs}, Optimism \cite{op-docs}) or zero-knowledge based (e.g., ZKsync \cite{zksync-docs}, Scroll \cite{scroll2023whitepaper}) depending on what types of proofs are used to verify the validity of \gls{l2} transactions.
Sidechains typically run a fast consensus mechanism among few peers in parallel to \gls{l1}.
On the other hand, rollups enable throughput scaling by off-loading compute and (possibly) storage resources off-chain without a need for large-scale consensus, which is provided by the underlying \gls{l1}.
Generally, \gls{l2} scaling solutions allow distrustful parties to deposit funds into a bridge smart contract on \gls{l1} and then operate on \gls{l2} via \gls{l2} transactions whose state is then updated to \gls{l1}.

\Cref{tab:blockchains_overview} summarizes the various blockchains that we analyze for cross-chain arbitrage. 
Unlike \glspl{l1}, which typically operate using a gas price model and provide public access to mempool data, \glspl{l2} typically feature private mempools with transactions ordered by centralized sequencers, which often follow a \gls{fcfs} strategy.
The latter makes it harder for users to extract \gls{mev} as they cannot observe other pending transactions nor can pay higher gas fees to prioritize their own transactions.
The block time does not correspond to the finality, as \glspl{l2} only offer a soft finality to its users as the transactions are only fully finalized after being settled on \gls{l1}.

\begin{table}[t!]
\centering
\footnotesize
\setlength{\tabcolsep}{2pt}
\begin{tabular}{llllrr}
\toprule
\textbf{Blockchain Name} & \textbf{Type} & \textbf{Mempool} & \textbf{Ordering} & \textbf{Block Time} &  \textbf{Launch}  \\
\midrule
Avalanche (AVA)       & L1         & Public  & Gas Price & $\sim$\SI{2.00}{\second}  & Sep 2020   \\
Binance Smart Chain (BSC)& L1     & Public  & Gas Price  & $\sim$\SI{3.00}{\second} & Sep 2020   \\
Ethereum (ETH)        & L1         & Public  & Gas Price       & $\sim$\SI{12.00}{\second}& Jul 2015  \\ 
Polygon PoS (POL)    & L2-SC      & Public  & Gas Price  & $\sim$\SI{2.00}{\second} & Jun 2020       \\ 
Arbitrum (ARB)       & L2-OR      & Private & FCFS$^{\ast}$     & $\sim$\SI{0.25}{\second} & Aug 2021  \\ 
Base (BASE)           & L2-OR      & Private & Gas Price$^{\ast\dagger}$     & $\sim$\SI{2.00}{\second} & Aug 2023 \\ 
Optimism (OP)       & L2-OR      & Private & Gas Price$^{\ast}$   & $\sim$\SI{2.00}{\second} & Dec 2021  \\  
Scroll (SCROLL)          & L2-ZK      & Private & FCFS$^{\ast}$   & $\sim$\SI{3.50}{\second} & Oct 2023   \\ 
ZKsync Era (ZKSYNC)     & L2-ZK      & Private & FCFS$^{\ast}$      & $\sim$\SI{1.00}{\second} & Mar 2023 \\  \bottomrule
$^{\ast}$Centralized sequencer.\\
$^{\dagger}$FCFS tie-breaking.
\end{tabular}
\caption{Overview of investigated Layer-1 (L1) and Layer-2 (L2) solutions. SC - Sidechain, OR - Optimistic Rollup, ZK - ZK-Rollup, FCFS - First-Come, First-Served.}
\label{tab:blockchains_overview}
\end{table}

\subsection{Blockchain Interoperability and Bridges}

Blockchain interoperability aims to connect blockchain networks and transfer assets or data between them. The transfer typically involves locking assets on one chain and minting a matching representation on the destination chain, conditional on some evidence of the asset being locked. Bridges are the basic infrastructure enabling this interoperability.
Using a bridge, a token can be moved from one blockchain to another while maintaining some security guarantees, with the form of guarantee depending on the means through which the locking of funds is proved to minting contracts.
Bridges can be categorized into either \textit{native} or \textit{multi-chain}.
Each solution comes with different security, latency, and cost trade-offs.

Native bridges are built into the architecture of the underlying blockchain and facilitate direct transfers between their \glspl{l1} and \glspl{l2}. 
The process typically involves asset locking via a smart contract, proof generation for the locked asset, and transmission of the asset to the destination chain.
On the destination chain, an equivalent amount of the asset is either minted (if the bridge uses wrapped tokens) or unlocked.
These bridges often prioritize security and are tightly integrated with the chain's architecture.
Multi-chain bridges, on the other hand, are more versatile as they can support several different blockchains, while native bridges only operate between two blockchains. However, such bridges are operated by third-party companies, introducing trust assumptions for counter-party risk. 
Multi-chain bridges follow the same steps of locking, proof generation, and minting/unlocking on the destination chain, except that they present a unified communication mechanism that is blockchain-agnostic.


\section{Theoretical Analysis}

In this section, we model cross-chain arbitrage and solve it to provide theoretical results on the profitability of inventory against bridge arbitrages, and the factors impacting this decision.

\subsection{Model}

We model trading across two chains. On the first chain, trading happens on a \gls{cpmm} with reserves $R^A_t$ and $R^B_t$ of the two tokens where the subscript denotes potential time dependency. On the second chain, trading happens on a perfectly liquid market where arbitrary amounts of tokens can be exchanged at a token $A$ to token $B$ exchange rate of $Q_t$.  This model is a good approximation to reality, even in the case that the second market also uses a \gls{cpmm} provided that the second market has more liquidity than the first market. We choose a perfectly liquid market for ease of exposition, to make the cost of inventory calculation less involved. See, however, \Cref{coi_bounded} for a derivation of the cost of inventory in the case of a \gls{cpmm} with limited liquidity in the second market.~\footnote{Qualitatively, the results would be similar, with the caveat that liquidity constraints in the second market make bridging relatively more attractive than buying and maintaining inventory in the second market, which would be more costly in that case.}

Token $A$ is the numéraire for the subsequent calculations, i.e., we measure value and cost in token $A$.
We assume that the two market, and the bridge operate without fees, and that gas and transaction costs for swaps and bridging are negligible.
 Moreover, we assume that the exchange rate in the second market follows a geometric Brownian motion with percentage drift $\mu$ and percentage variance $\sigma$. We assume that the risk-free rate was already subtracted so that $\mu$ is the excess return over the risk-free investment. We generically think of token $B$ as a token with negative excess return over the risk-free investment so that $\mu<0$. Implicitly, this means that the market is incomplete so that token $B$ cannot, or only at high cost, be shorted. If on the other hand, $\mu> 0$, the subsequent discussion would be somehow trivial, as in that case the arbitrageur would rather want to hold token $B$ permanently after having bought it, rather than selling it for arbitrage profits on the second market.\footnote{Alternatively, we could also have non-negative excess return, but make the trader risk averse and obtain qualitatively similar results.}

We assume that arbitrage opportunities of equal size arrive according to a Poisson process with rate $\lambda$: this means that the exchange rates $P_t:=R^A_t/R_t^B$ and $Q_t$ in the two markets are the same almost always, but every 
$\lambda^{-1}$ minutes on average, we have $P_\tau<Q_\tau=pP_\tau$ for a constant factor $p>1$, where $\tau$ is the random arrival time of the opportunity.\footnote{The equal size assumption on arbitrage opportunities is for tractability of the model.} In a hypothetical world without frictions where tokens can be bridged instantaneously between the two market at zero cost or the arbitrageur can hold inventory of the second token in the second market at zero cost, a profit-maximizing arbitrageur sells an amount of
$$(\sqrt{p}-1)R_\tau^A$$
token $A$ in the first market
to obtain an amount of $$(1-\sqrt{1/p})R_\tau^B$$
 token $B$ that he sells in the second market to obtain an overall profit of
$$(\sqrt{p}-1)^2R_\tau^A.$$
\Cref{fig:arb_prof_model} plots the change in arbitrage profit with respect to $p$. See, e.g., \cite{fritsch2024mevcapturetimeadvantagedarbitrage} for a derivation of the profit and optimal trade size.

\subsubsection{Non-Instantaneous Bridging}
Now suppose, bridging is non-instantaneous and it requires time $\Delta>0$ to bridge funds from the first to the second market. If the arbitrageur does not hold inventory on the second market, and needs to bridge from the first to the second market to complete the arbitrage, then the above formula for the optimal trade size needs to be adjusted for the expected token price depreciation during the bridging. An expected profit-maximizing arbitrageur chooses his trade expecting that after bridging he can exchange the $B$ tokens at an (in expectation) worse exchange rate $Q_{t+\Delta}$ rather than the instantaneous rate $Q_t$. Thus, he chooses to trade an amount of 
$$\left(\sqrt{p}e^{\mu\Delta/2}-1\right)R_\tau^A$$
token $A$,
to make an expected profit\footnote{The expectation is at time $\tau$ (i.e., conditional on the realization of $\tau$, $R_\tau$ and $Q_{\tau}$) over the uncertain realization of $Q_{\tau+\Delta}$. Subsequently, we denote by subscripts the time at which we take expectation.} of 

\begin{figure}[t!]
    \centering
    \includegraphics[width=\linewidth]{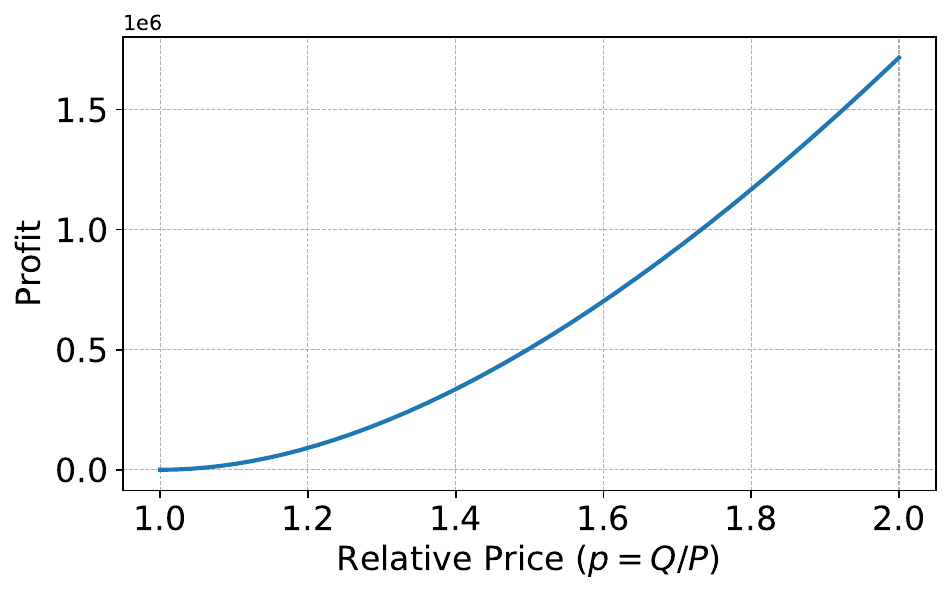}
    \caption{Arbitrage profit as a function of the relative price (\(p = Q/P\)) between a \gls{cpmm} (\(P\)) and a perfectly liquid market (\(Q\)). Token reserve of the \gls{cpmm} is fixed at \(R=10^{7}\).}
\label{fig:arb_prof_model}
\end{figure}

\begin{align*}&E_{\tau}[Q_{\tau+\Delta}\left(1-\sqrt{p^{-1}}\right)R_\tau^B-\left(\sqrt{p}-1\right)R_\tau^A]\\=&E_{\tau}[\frac{Q_{\tau+\Delta}}{Q_{\tau}}]\left(1-\sqrt{p^{-1}}e^{-\mu\Delta/2}\right)R_\tau^A-\left(\sqrt{p}e^{\mu\Delta/2}-1\right)R_\tau^A\\=&\left(\sqrt{p}e^{\mu\Delta/2}-1\right)^2R_\tau^A.\end{align*}

Note that the expected profit is strictly smaller than in the frictionless case, as $\Delta>0.$  We call the difference in profit per unit of liquidity in the \gls{amm},
$$C^{BR}:=(\sqrt{p}-1)^2-(\sqrt{p}e^{\mu\Delta/2}-1)^2,$$


the marginal \emph{cost of non-instantaneous bridging}. \Cref{fig:bridge_cost_model} illustrates the change in this cost depending on bridging time.

\subsubsection{Costly Inventory}
Next, we compare this to the scenario where the arbitrageur holds inventory of token $B$ in the second market. In this case, the arbitrageur optimally needs to maintain an inventory of $I_t:=(1-\sqrt{p^{-1}})R_t^B$  units of $B$ tokens on the second market.

The cost of inventory is given by the following expression using a standard calculation:
\begin{lemma}
The expected cost of inventory is 
\begin{align*}
C(I)&:=-E_0[\int_0^{\tau}I_tdQ_t]
\end{align*}
\end{lemma}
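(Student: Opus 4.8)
The plan is to treat $C(I)$ as the expected mark-to-market loss on the held inventory and then reduce the defining stochastic integral to a deterministic time integral. First I would justify the expression itself: over an interval $[t,t+dt]$ the arbitrageur holds $I_t$ units of token $B$, worth $I_tQ_t$ in the numéraire $A$, and with the position held fixed this value moves by $I_t\,dQ_t$. Accumulating these increments from the present until the random opportunity time $\tau$ and taking expectations yields $E_0[\int_0^\tau I_t\,dQ_t]$, the expected change in the inventory's numéraire value, which is negative because $\mu<0$. The cost is a positive quantity, namely its negative, which is exactly $C(I)$.

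Next I would evaluate the expectation. Substituting the geometric Brownian dynamics $dQ_t=\mu Q_t\,dt+\sigma Q_t\,dW_t$ splits the integral into a drift part and a stochastic part,
\begin{equation*}
\int_0^\tau I_t\,dQ_t=\mu\int_0^\tau I_tQ_t\,dt+\sigma\int_0^\tau I_tQ_t\,dW_t .
\end{equation*}
Because the arbitrage arrivals are generated by a Poisson process independent of the price path, $\tau$ is independent of $W$; conditioning on $\tau$ and using that $t\mapsto\int_0^t I_sQ_s\,dW_s$ is a mean-zero martingale kills the stochastic term, leaving $C(I)=-\mu\,E_0[\int_0^\tau I_tQ_t\,dt]$, which is strictly positive. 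Writing $V_t:=I_tQ_t$ for the numéraire value of the inventory and noting that between opportunities the two prices agree (so that $V_t=(1-\sqrt{p^{-1}})R_t^A$), I would then swap expectation and time integral by Tonelli and insert $P(\tau>t)=e^{-\lambda t}$ together with the GBM mean of $V_t$ to reach the closed form.

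The hard part will be the joint treatment of the random horizon $\tau$ and the stochastic, time-varying integrand. Making the $dW$-term vanish rigorously requires an integrability condition relating the Poisson rate to the growth of $Q_t$ (so that $E_0[\int_0^\tau (I_tQ_t)^2\,dt]<\infty$ and optional stopping, or equivalently the conditioning argument, is valid); the independence of $\tau$ and $W$ is precisely what makes this clean. A secondary point is a modeling choice hidden inside $I_t$—whether the $B$-holdings are kept at a fixed token count or continuously rebalanced so that $R_t^B$ tracks $Q_t$—which alters only the deterministic factor $E_0[V_t]$, and hence the denominator of the final expression, but not the structure of the argument.
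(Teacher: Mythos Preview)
Your first paragraph has the right intuition, but it glosses over the step that is the actual content of the paper's proof. The lemma is not about evaluating the integral; it is about identifying $-\int_0^\tau I_t\,dQ_t$ with the loss in portfolio value. In the paper this is done by tracking the full portfolio $V_t=x_t+Q_tI_t$ (token-$A$ cash plus the numéraire value of the $B$-inventory) and using the self-financing condition: a rebalance from $I_{t_{i-1}}$ to $I_{t_i}$ is paid for in token $A$ at the prevailing price, so $x_{t_i}-x_{t_{i-1}}=Q_{t_i}(I_{t_{i-1}}-I_{t_i})$. A one-line telescope then gives $V_{t_i}-V_{t_{i-1}}=(Q_{t_i}-Q_{t_{i-1}})I_{t_{i-1}}$, and passing to the limit yields the It\^o integral pathwise, before any expectation is taken. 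Your phrase ``with the position held fixed this value moves by $I_t\,dQ_t$'' is exactly this increment, but you then call the resulting integral ``the expected change in the inventory's numéraire value,'' which is not correct when $I_t$ varies: $d(I_tQ_t)$ carries a $Q_t\,dI_t$ term and a covariation term. It is only after those terms are absorbed into the cash leg $x_t$ via self-financing that the identification goes through, and that accounting is precisely what your argument skips.

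Everything from your second paragraph onward---splitting $dQ_t$ into drift and martingale parts, killing the $dW$ term via independence of $\tau$ and $W$, inserting the Poisson survival probability---is the computation the paper carries out \emph{after} the lemma, not its proof. It is correct and matches the paper's subsequent derivation, but it answers a different question than the one the lemma poses.
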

\begin{proof}
Let the value of the portfolio of the arbitrageur at time $t$ be $V_t\equiv x_t+Q_tI_t$ where $x_t$ are his token $A$ holdings. We use a discrete approximation and then pass to the limit:
Let the time between $0$ and the random arrival time $\tau$ be partitioned by $0=t_0<t_1<\ldots<t_n=\tau.$  If the arbitrageur re-adjusts his portfolio at discrete times $t_0,\ldots,t_n$, his token $A$ holding change from period to period by $x_{t_i}-x_{t_{i-1}}=Q_{t_i}(I_{t_{i-1}}-I_{t_i})$. Thus, the portfolio value between re-adjustments changes by 
\begin{align*}V^{(n)}_{t_i}-V^{(n)}_{t_{i-1}}=Q_{t_i}I_{t_i}+x_{t_i}-Q_{t_{i-1}}I_{t_{i-1}}-x_{t_{i-1}}\\=Q_{t_i}I_{t_i}-Q_{t_{i-1}}I_{t_{i-1}}+Q_{t_{i}}(I_{t_{i-1}}-I_{t_{i}})=(Q_{t_i}-Q_{t_{i-1}})I_{t_{i-1}}.
\end{align*}Thus the total change in the value of his inventory over time is
\begin{align*}V^{(n)}_0-V^{(n)}_{\tau}=\sum_{i=1}^n (V^{(n)}_{t_{i-1}}-V^{(n)}_{t_i})=\sum_{i=1}^n(Q_{t_{i-1}}-Q_{t_i})I_{t_{i-1}}\end{align*}
which in the limit for finer and finer partitions becomes an Ito integral
$$V_0-V_\tau=\lim_{n\to\infty}V^{n}_0-V^{n}_\tau=-\int_0^\tau I_tdQ_t. $$
Taking expectations yields the result.
\end{proof}

\begin{figure}[t!]
    \centering
    \includegraphics[width=\linewidth]{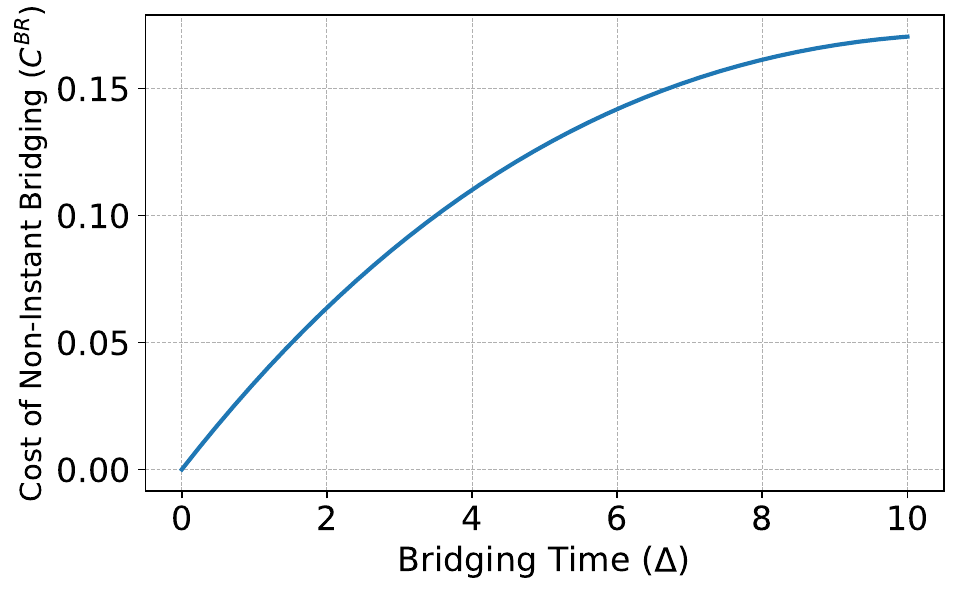}
    \caption{Cost of non-instantaneous bridging (\(C^{BR}\)) as a function of the bridging time (\(\Delta\)). Parameters: relative price $p = 2$, drift $\mu = -6.25\%$.}
\label{fig:bridge_cost_model}
\end{figure}

The cost will depend on the evolution of liquidity in the first market over time, as this determines the size of the optimal arbitrage trade. In the following, we consider the special case where the reserves take the form
$$ R_t^B=(\frac{Q_0}{Q_t})^kR_0^B$$
which gives the cost per unit traded a particularly simple form, as shown below. While the cost would take a different form in general, the class contains two reasonable and tractable cases as special case:
\begin{enumerate}
\item Liquidity in- and outflows in the \gls{amm} are such that the value of the inventory in the \gls{cpmm} (in terms of token $A$), $R_t^A+Q_tR_t^B=2R_t^A$,  is constant. This corresponds to the case $k=0$. This is a reasonable approximation if the time horizon considered is relatively short and the market is sufficiently liquid.
\item There are no inflows of liquidity so that $R_t^AR_t^B=R^A_0R_0^B$ for all $t>0$. This corresponds to the case $k=1/2$. This is a reasonable approximation if we are worried about an increasingly illiquid market so that the arbitrageur can make less profit from arbitraging over time. 
\end{enumerate}

To calculate the cost of inventory per unit traded, first note that the expected value (at time $0$) of the inventory at time $0\leq t\leq \tau$ is
$$E_0[Q_t I_t]=(1-\sqrt{\tfrac{1}{p}})R_0^AE_0[\frac{Q_0^{k-1}}{Q_t^{k-1}}]=(1-\sqrt{\tfrac{1}{p}})R_0^Ae^{(k-1)(\tfrac{1}{2}k\sigma^2-\mu)t}.$$
The expected cost (at time $0$, i.e., with respect to the uncertain realizations of the arrival time $\tau$ and of the exchange rate changes from $0$ until the arrival time $\tau$) is given by
\begin{align*}C(I)&=-E_0[\int_0^\tau I_tdQ_t]=-E_0[\int_0^\tau \mu Q_tI_tdt+\sigma I_tdW_t]\\&=-\mu(1-\sqrt{\tfrac{1}{p}})R_0^A E
[\int_0^\tau \frac{Q_0^{k-1}}{Q_t^{k-1}}]=\frac{-\mu(1-\sqrt{\tfrac{1}{p}})R_0^A}{\lambda+(1-k)(\tfrac{1}{2}k\sigma^2-\mu)}.\end{align*}
The expected value (at time $0$) of the inventory at the time of the arbitrage $\tau$ is 
$$E_0[Q_{\tau}I_\tau]=\frac{\lambda(1-\sqrt{\tfrac{1}{p}})R_0^A}{\lambda+(1-k)(\tfrac{1}{2}k\sigma^2-\mu)}.$$
The expected cost per unit traded is therefore
\begin{align*}\frac{C(I)}{E_0[Q_{\tau}I_\tau]}=\frac{-\mu}{\lambda}.
\end{align*}

The cost is strictly positive whenever $\mu<0$.



\subsection{Results}\label{theo_results}
Comparing the expected cost of bridging and the expected inventory cost at time $0$, we come to the following conclusion:
\begin{theorem}
Let the price difference in the two markets at the time of an opportunity be $p>1$. Define $$C^{BR}:=(\sqrt{p}-1)^2-(\sqrt{p}e^{\mu\Delta/2}-1)^2.$$
A profit-maximizing arbitrageur chooses to hold inventory of token $B$ if and only if
$$\frac{-\mu}{\lambda}(1-\sqrt{\tfrac{1}{p}})<C^{BR}.$$
\end{theorem}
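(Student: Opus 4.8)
The plan is to treat the arbitrageur's choice as a comparison of the expected profit of the two strategies, both evaluated at time $0$ and both conditioned on capturing the next opportunity at the random arrival time $\tau$. Since the frictionless profit, the bridging-discounted profit, and the inventory cost have all been derived above, the theorem should reduce to a clean algebraic comparison once the two costs are expressed relative to a common reference quantity.

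First I would write down the two expected payoffs. With pre-positioned inventory the arbitrageur faces no bridging delay, so at $\tau$ it captures the full frictionless profit $(\sqrt{p}-1)^2 R_\tau^A$ but pays the holding cost $C(I)$ accrued over $[0,\tau]$; its time-$0$ expected value is $(\sqrt{p}-1)^2 E_0[R_\tau^A] - C(I)$. Bridging instead yields the delay-discounted profit $(\sqrt{p}e^{\mu\Delta/2}-1)^2 R_\tau^A$ at $\tau$ with no holding cost, for an expected value $(\sqrt{p}e^{\mu\Delta/2}-1)^2 E_0[R_\tau^A]$. Subtracting, inventory is preferred exactly when $C^{BR} E_0[R_\tau^A] > C(I)$, since the bracket $(\sqrt{p}-1)^2 - (\sqrt{p}e^{\mu\Delta/2}-1)^2$ collapses to the constant $C^{BR}$.

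Next I would eliminate the common liquidity factor. By the Lemma, $C(I) = \tfrac{-\mu}{\lambda} E_0[Q_\tau I_\tau]$, and because $I_\tau = (1-\sqrt{p^{-1}})R_\tau^B$ with the reserves tied to the aligned price, the identity already used to derive $E_0[Q_\tau I_\tau]$ gives $E_0[Q_\tau I_\tau] = (1-\sqrt{p^{-1}})E_0[R_\tau^A]$. Substituting yields $C(I) = \tfrac{-\mu}{\lambda}(1-\sqrt{p^{-1}})E_0[R_\tau^A]$, so both sides of the preference inequality carry the strictly positive factor $E_0[R_\tau^A]$. Dividing it out produces precisely $\tfrac{-\mu}{\lambda}(1-\sqrt{p^{-1}}) < C^{BR}$, which is the claim; the converse direction (bridging preferred) follows from the same chain of equivalences.

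The main obstacle is not the algebra but making the two costs genuinely commensurable: $C^{BR}$ is stated per unit of AMM liquidity $R^A$ at the opportunity, whereas $C(I)$ was derived per unit of traded value $Q_\tau I_\tau$. The decisive step is therefore recognizing that both are proportional to the same expectation $E_0[R_\tau^A]$ through $E_0[Q_\tau I_\tau] = (1-\sqrt{p^{-1}})E_0[R_\tau^A]$, so that the entire path-dependence of liquidity (the parameters $k$ and $\sigma$ and the arrival rate $\lambda$ buried inside $E_0[R_\tau^A]$) cancels and the decision rule depends only on $p$, $\mu$, $\lambda$, and—via $C^{BR}$—the delay $\Delta$. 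I would finish by checking the sign conventions, $\mu<0$ and $p>1$, which keep both sides of the final inequality positive and confirm its direction is not reversed.
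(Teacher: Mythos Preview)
Your proposal is correct and matches the paper's approach: the paper presents the theorem as the direct comparison of the two costs derived just before (``Comparing the expected cost of bridging and the expected inventory cost at time~$0$\ldots''), and your argument simply makes that comparison explicit, reducing it to $C^{BR}\,E_0[R_\tau^A] > C(I)$ and then using $C(I)=\tfrac{-\mu}{\lambda}(1-\sqrt{p^{-1}})E_0[R_\tau^A]$ so that the common factor $E_0[R_\tau^A]$ cancels. One minor labeling point: the identity $C(I)/E_0[Q_\tau I_\tau]=-\mu/\lambda$ is derived in the text after the Lemma, not in the Lemma itself (which only gives $C(I)=-E_0[\int_0^\tau I_t\,dQ_t]$).
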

Let us first look at the arbitrage opportunity arrival rate $\lambda$. 
The inequality is satisfied if and only if 
$$\lambda>\frac{-\mu}{C^{BR}}(1-\sqrt{\tfrac{1}{p}}).$$
Thus, if opportunities arrive frequently enough, it is worthwhile to maintain inventory.
\Cref{fig:delta_vs_lambda} plots the minimum arrival rate that makes inventory profitable over bridging for each bridging time value.

\begin{figure*}[t!]
    \centering
    \begin{subfigure}[t]{0.48\textwidth}
        \centering
        \includegraphics[width=\linewidth]{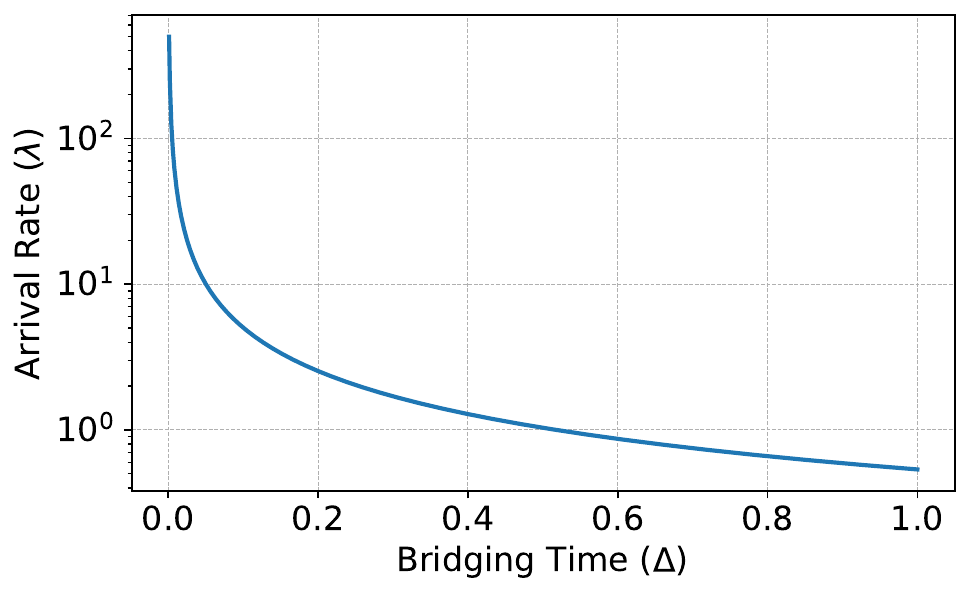}
        \caption{$\lambda$ threshold vs. $\Delta$.}
        \label{fig:delta_vs_lambda}
    \end{subfigure}
    \hfill
    \begin{subfigure}[t]{0.48\textwidth}
        \centering
        \includegraphics[width=\linewidth]{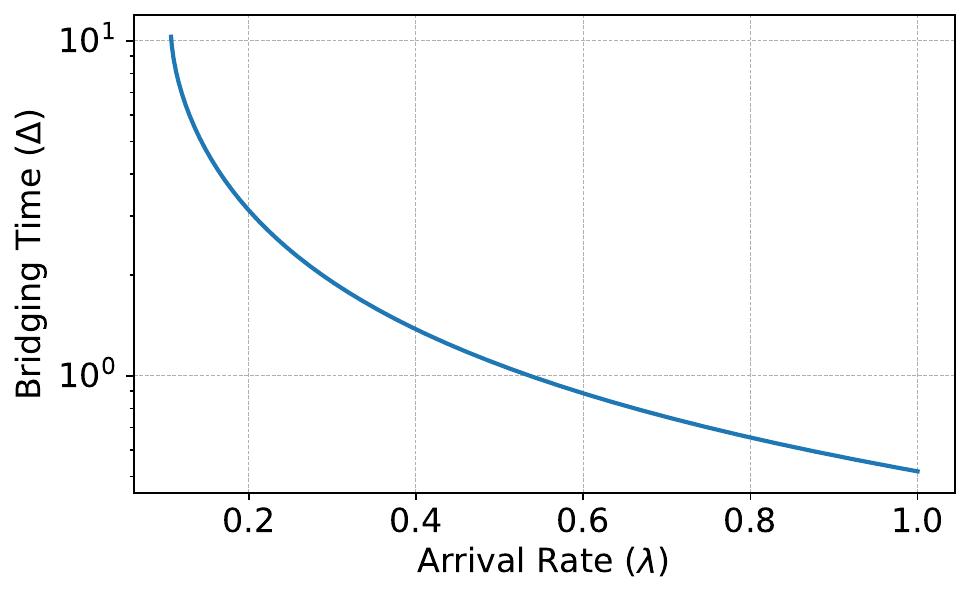}
        \caption{$\Delta$ threshold vs. $\lambda$.}
        \label{fig:lambda_vs_delta}
    \end{subfigure}
    \caption{
    Profitability boundaries for inventory and bridge arbitrage as functions of bridging time~($\Delta$) and opportunity arrival rate~($\lambda$). Parameters: relative price $p = 2$, drift $\mu = -6.25\%$. 
   (\subref{fig:delta_vs_lambda}) Minimum $\lambda$ required for inventory arbitrage to dominate as $\Delta$ varies.  
  (\subref{fig:lambda_vs_delta}) Maximum $\Delta$ tolerable for bridge arbitrage to dominate as $\lambda$ varies.
    }
    \label{fig:profit_conditions_model}
\end{figure*}


Similarly, we can look at the bridging time $\Delta$. When bridging is fast enough, it is less costly than maintaining inventory. 

 \[
 \Delta < -\frac{2}{\mu}\ln\!\left(\frac{\sqrt{p}}{1+\sqrt{M}}\right),\text{ for }M:=(\sqrt{p}-1)^2+\tfrac{\mu}{\lambda}(1-\tfrac{1}{\sqrt{p}}).
 \]

\Cref{fig:lambda_vs_delta} shows, for each arrival rate, the maximum bridging delay for which bridging still outperforms holding inventory.

For low enough bridging time we can define a threshold such that for high enough expected depreciation of the value of token $B$ the arbitrageur will bridge, while for low expected depreciation of the value of token $B$ the arbitrageur will hold inventory.

\begin{corollary}
If $1/\lambda>\Delta p$, then there is a threshold $-\lambda\leq\hat{\mu}\leq0$ such that for low expected depreciation of the value of token $B$, $\mu>\hat{\mu}$, the arbitrageur chooses to hold inventory and for high expected depreciation of the value of token $B$, $\mu<\hat{\mu}$, the arbitrageur bridges.
\end{corollary}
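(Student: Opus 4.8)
The plan is to reduce the binary hold-versus-bridge decision to the sign of a single scalar margin and then control that sign as $\mu$ varies. Fixing $p>1$, $\Delta>0$ and $\lambda>0$, I would define the \emph{inventory-minus-bridge margin}
\[
f(\mu):=C^{BR}(\mu)-\frac{-\mu}{\lambda}\Bigl(1-\tfrac{1}{\sqrt p}\Bigr),\qquad C^{BR}(\mu)=(\sqrt p-1)^2-\bigl(\sqrt p\,e^{\mu\Delta/2}-1\bigr)^2 .
\]
By the Theorem the arbitrageur holds inventory exactly when $f(\mu)>0$ and bridges when $f(\mu)<0$, so the corollary becomes the claim that $f$ changes sign once on $[-\lambda,0]$, being positive above the crossing and negative below it. First I would record the anchor value $f(0)=0$ (both terms vanish at $\mu=0$), so $\mu=0$ is always a root of $f$; the entire question is the location and multiplicity of the remaining roots in the depreciation regime $\mu<0$.

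Next I would compute the derivative at the anchor. A direct differentiation gives
\[
f'(0)=\frac{\sqrt p-1}{\sqrt p}\Bigl(\frac{1}{\lambda}-\Delta p\Bigr),
\]
so the hypothesis $1/\lambda>\Delta p$ is \emph{precisely} the condition $f'(0)>0$: the margin crosses zero upward at $\mu=0$, i.e.\ $f<0$ just below $0$ (bridge) and $f>0$ just above (inventory). This already fixes the orientation the statement demands; what remains is to rule out a second crossing inside $(-\lambda,0)$ that would break the clean dichotomy.

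To obtain global control I would change variables to $u:=\sqrt p\,e^{\mu\Delta/2}$, which maps $\mu\in(-\infty,0]$ monotonically onto $u\in(0,\sqrt p]$ and turns $f$ into
\[
f=(\sqrt p-1)^2-(u-1)^2+\frac{2}{\Delta\lambda}\Bigl(1-\tfrac{1}{\sqrt p}\Bigr)\ln\frac{u}{\sqrt p},
\qquad
\frac{df}{du}=\frac{2}{u}\bigl(-u^2+u+c\bigr),\; c:=\tfrac{1}{\Delta\lambda}\bigl(1-\tfrac{1}{\sqrt p}\bigr)>0 .
\]
The bracketed factor is a downward parabola with a single positive root $u^\star=\tfrac12(1+\sqrt{1+4c})$, and the earlier computation shows $f'(0)>0$ is equivalent to $u^\star>\sqrt p$. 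Hence on the whole admissible range $u\in(0,\sqrt p]$ the derivative $df/du$ is strictly positive, so $f$ is strictly increasing there. Together with $f(0)=0$ this yields $f(\mu)<0$ for every $\mu<0$ and $f>0$ for $\mu$ just above $0$, so the \emph{only} sign change is at $\hat\mu=0$, which trivially satisfies $-\lambda\le\hat\mu\le0$ and delivers the stated dichotomy ($\mu>\hat\mu$ inventory, $\mu<\hat\mu$ bridge); under this fast-bridging hypothesis the inventory region collapses to $\mu>0$, i.e.\ the arbitrageur bridges throughout the depreciation regime.

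I expect the main obstacle to be exactly this global step: establishing strict monotonicity of $f$ over the entire range rather than merely near $\mu=0$, since a priori $f$ could rise back above zero and manufacture a spurious interior threshold. The change of variables is what makes this tractable, collapsing the transcendental comparison into the sign of a single quadratic and tying the parabola's root $u^\star$ to the hypothesis through the same inequality $1/\lambda>\Delta p$ that governs $f'(0)$. The only other place demanding care is the sign bookkeeping—in particular that $1-1/\sqrt p>0$ and $\sqrt p-1>0$ for $p>1$—which is precisely where the orientation of the threshold can be flipped if handled carelessly.
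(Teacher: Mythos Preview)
Your proposal is correct and matches the paper's approach: both establish that the margin is monotone in $\mu$ on $(-\infty,0]$ by controlling its derivative, the paper via the direct bound $pe^{\mu\Delta}-\sqrt{p}\,e^{\mu\Delta/2}\le p-\sqrt{p}$ (valid for $\mu\le0$) and you via the substitution $u=\sqrt{p}\,e^{\mu\Delta/2}$, which reduces the same sign condition to the quadratic $-u^2+u+c$. Your closing observation that the threshold degenerates to $\hat\mu=0$ is exactly what the paper's one-line argument also yields, since $f(0)=0$ and $f$ is strictly increasing on the depreciation range.
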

\begin{proof}
The expression 
$$\frac{-\mu}{\lambda}(1-\sqrt{\tfrac{1}{p}})-C^{BR}$$ is decreasing in $\mu$, as can be verified from the negativity of the derivative:
$$-\frac{1}{\lambda}(1-\sqrt{\tfrac{1}{p}})+\Delta(p e^{\mu\Delta}-\sqrt{p}e^{\mu\Delta/2})<-\frac{1}{\lambda}(1-\sqrt{\tfrac{1}{p}})+\Delta(p -\sqrt{p}).$$
\end{proof}

\Cref{fig:volatility_vs_cost} visualizes how the cost difference between inventory and bridge arbitrage changes with depreciation of the value of the token $B$, showing inventory becoming more costly below a threshold value.

\begin{figure}[b!]
    \centering
    \includegraphics[width=\linewidth]{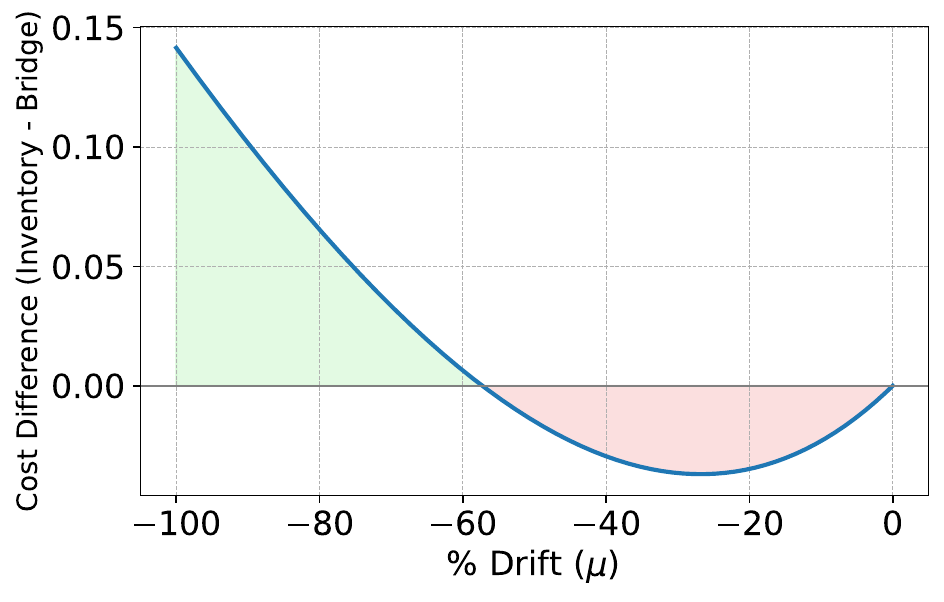}
    \caption{Cost difference (inventory – bridge) versus asset drift~($\mu$). Positive values (green) mean bridging is cheaper; negative values (red) mean holding inventory is cheaper. Parameters: relative price $p=2$, opportunity arrival rate $\lambda=1$, bridging time $\Delta=1$.}
\label{fig:volatility_vs_cost} 
\end{figure}
\section{Empirical Analysis}

In this section, we analyze a full year of executed cross-chain arbitrages across nine blockchains. After outlining our detection methodology, we describe the curated datasets and then examine the results to gauge how prevalent these arbitrages are in practice.



\subsection{Detecting Cross-Chain Arbitrage}\label{detection}
Existing arbitrage detection methods either target atomic strategies confined to a single chain \cite{qin2022quantifying, wang, mclaughlin, rollinginshadows_torres} or on non-atomic arbitrage between \glspl{cex} and \glspl{dex} \cite{heimbach_non-atomic_2024}. However, neither is sufficient for cross-chain scenarios. We therefore introduce a hybrid approach that combines their key aspects.

Similar to atomic arbitrage, we require a closed cycle of trades that starts and ends with economically equivalent assets (e.g., native, wrapped, or bridged versions of the same token), such as USDC and USDT or ETH and WETH, where each trade’s output becomes next trade's input, and the final output exceeds the initial input. Unlike atomic arbitrage, these trades span multiple transactions across different blockchains. Consequently, as with \gls{cex}-to-\gls{dex} arbitrage, single transactions may appear unprofitable in isolation but reveal profit when matched with a counterpart on another domain.

To detect such arbitrages, we focus on two-hop cross-chain cycles.  
Let \(S_i \subseteq T_i\) denote the set of \emph{swap}-event emitting transactions on blockchain \(i\) for \(i \in \{1,\dots,B\}\).  
For any transaction pair \((t_1^i,\, t_2^j)\) with \(t_1^i \in S_i\), \(t_2^j \in S_j\), and \(i \neq j\), assume each transaction has

\[
(a_{\text{in}},\,x_{\text{in}}) \longrightarrow (a_{\text{out}},\,x_{\text{out}}),
\]

where \(a\) is the asset and \(x\) the amount.  
We classify the pair as a cross-chain arbitrage if it satisfies the following heuristics:

\begin{enumerate}[label=\textbf{H\arabic*}, leftmargin=2.1em, itemsep=0.25em]
  \item Cyclic: the input and output assets of the transactions form a closed loop,
        \(a_{\text{out}}^{t_1^i}\equiv a_{\text{in}}^{t_2^j}\) and
        \(a_{\text{out}}^{t_2^j}\equiv a_{\text{in}}^{t_1^i}\).

  \item Marginal match: the intermediate amounts differ by
        at most \SI{0.5}{\%},
        \(\tfrac{\lvert x_{\text{out}}^{t_1^i}-x_{\text{in}}^{t_2^j}\rvert}
          {x_{\text{out}}^{t_1^i}}\le0.005\).

  \item Temporal window: the gap between \(t_1^i\) and \(t_2^j\)
        is \(\le\)\SI{12}{\second} for stablecoin-native pairs  
        and \(\le\)\SI{1}{\hour} otherwise.

  \item Entity link: the same \gls{eoa} submits both transactions,
        or both first interact with the same \gls{mev} contract.
\end{enumerate}

\textbf{H1} follows the standard cycle heuristic for detecting arbitrage, except that our swaps span two transactions. We match only the input and output assets of those transactions, ignoring any intermediate swap assets. The matched assets need not be identical but must be economically equivalent.

\textbf{H2} verifies that the matched swaps differ by no more than a small margin, ensuring the arbitrageur’s inventory position is unchanged aside from the final profit. For accurate accounting, we aggregate the total amounts of the original input and output assets across all swaps within a transaction. This is particularly important when trades are split into many smaller swaps, as is common with aggregator protocols that optimize execution price.

We set the marginal-difference threshold to \SI{0.5}{\%} to account for small discrepancies from bridging fees or inventory adjustments. Sampling one hour of trades on seven different days, we compared thresholds of \SI{1.0}{\%}, \SI{0.5}{\%}, and \SI{0.1}{\%}. Tightening the cut-off from \SI{1.0}{\%} to \SI{0.5}{\%} eliminated about \SI{9}{\%} false positives without losing genuine matches. Reducing it further to \SI{0.1}{\%} discarded roughly \SI{15}{\%} valid arbitrages, proving too strict. Thus, \SI{0.5}{\%} strikes an optimal balance, keeping true arbitrages while filtering spurious ones.

\textbf{H3} captures the time-sensitive nature of cross-chain arbitrage, with a focus on the temporal window between the trades. As with other strategies, arbitrageurs must execute fast to avoid being frontrun or displaced by noise trades that erase the price gap. For stablecoin-native pairs (e.g., WETH-USDC) we impose a \SI{12}{\second} limit, equal to the longest block time in our dataset (Ethereum). This strict bound is justified as such pairs are typically arbitraged against continuously running \glspl{cex} \cite{heimbach_non-atomic_2024}, making longer cross-chain execution implausible. For all other pairs we allow up to \SI{1}{\hour}, which is more of an upper bound to limit computational work. Empirically, this bound is non-restrictive: $\sim$\SI{79}{\%} of matches occur within 10 minutes, and $\sim$\SI{94}{\%} within 30 minutes, ensuring coverage without introducing spurious links.

\textbf{H4} links the two transactions to the same actor to prevent matching unrelated trades that merely share similar assets, amounts, and timing. Specifically, the two transactions must either originate from the same \gls{eoa} or both first interact with the same \gls{mev} contract. All contract addresses are cross-checked against a curated list of non-\gls{mev} contracts to avoid false positives.

After applying \textbf{H1}-\textbf{H4}, some transactions still appear in multiple candidate pairs. We de-duplicate using a three-level ranking:

\begin{enumerate}
    \item  \emph{Marginal difference} below \SI{0.1}{\%}. Sensitivity tests on a 70-day sample showed consistent and stable results within the range [\SI{0.001}{\%}, \SI{0.1}{\%}]. Looser thresholds occasionally prioritized false matches.
    \item \emph{Time gap} $\leq$ \SI{240}{\second}. This corresponds to the average LayerZero bridging time (see \Cref{tab:bridges_overview}), one of the most widely used cross-chain messaging protocols.
    \item Remaining candidates are ordered by ascending marginal difference, then by ascending time gap.
\end{enumerate}

For each transaction, we keep only its highest-ranked match, ensuring every pair is both economically plausible and temporally coherent.

\subsubsection{Identifying Bridge Usage}

To determine which strategy arbitrageurs are using—inventory or bridge arbitrage—we need a method to distinguish them. From a detection standpoint, our heuristics flag both types as cross-chain arbitrage. However, the key difference lies in the need for asset transfers: bridge arbitrage requires moving assets across chains, whereas inventory arbitrage does not. Therefore, by identifying whether bridge transactions occurred between the swaps, we can classify a cross-chain arbitrage.

Arbitrageurs typically use two types of bridges: native and multi-chain. \Cref{tab:bridges_overview} summarizes the bridges identified in this study, including their type, average bridging time, and detection method. Some, like LayerZero and Stargate, also serve as infrastructure for other bridges. We use two distinct approaches to detect and link bridge transactions within cross-chain arbitrages: \emph{unique identifier} and \emph{token transfer}.

\begin{table}[b!]
\centering
\footnotesize
\setlength{\tabcolsep}{2pt}
\begin{adjustbox}{width=\columnwidth}
\begin{tabular}{llrc}
\toprule
\textbf{Bridge Name} & \textbf{Bridge Type} & \textbf{Bridging Time} & \textbf{Detection Method} \\
\midrule
Across~\cite{acrossbridge-docs}      & Multi-chain & 	1-4 mins & Token Transfer \\
Arbitrum One~\cite{arbbridge-docs}     & Native & 15-30 mins & Unique Identifier \\
Axelar~\cite{axelarbridge-docs}      & Multi-chain &  	up to 1 hour & Token Transfer \\
Base~\cite{basebridge-docs}      & Native & $\sim$10 mins & Unique Identifier \\
Celer~\cite{celerbridge-docs}      & Multi-chain &  	5-20 mins & Token Transfer \\
LayerZero~\cite{layerzerobridge-docs}      & Multi-chain & $\sim$4 mins & Token Transfer \\
Optimism~\cite{opbridge-docs}      & Native & 1-3 mins & Unique Identifier \\
Polygon~\cite{polygonbridge-docs}      & Native & 10-30 mins & Token Transfer \\
Scroll~\cite{scrollbridge-docs}      & Native & $\sim$4 hours & Unique Identifier \\
Stargate~\cite{layerzerobridge-docs}      & Multi-chain & $\sim$4 mins & Token Transfer \\
Synapse~\cite{synapsebridge-docs}      & Multi-chain & 	10-20 mins & Token Transfer \\
Wormhole~\cite{wormholebridge-docs}      & Multi-chain &  	up 24 hours & Token Transfer\\
ZKsync~\cite{zksyncbridge-docs}      & Native & $\sim$15 mins & Unique Identifier \\
\bottomrule
\end{tabular}
\end{adjustbox}
\caption{Overview of detected native and multi-chain bridges.}
\label{tab:bridges_overview}
\vspace{-5mm}

\end{table}

\textbf{Unique Identifier.} 
This method utilizes unique identifiers emitted by native bridges during cross-chain transactions, enabling us to link transfers even when sender and receiver addresses differ. Detection occurs in two phases: first, we compile a dataset of all native bridge transactions within the study period; then, we examine whether arbitrageurs used these native bridges between the legs of their arbitrage.

In the first phase, we adapt the method from \cite{rollinginshadows_torres} to link native bridge transactions across Arbitrum, Base, Optimism, Scroll, and ZKsync. We construct the dataset by scanning Ethereum and the corresponding rollups for events emitted by their native bridge contracts. For instance, on Arbitrum, we match Ethereum's \emph{InboxMessageDelivered} event with Arbitrum's \emph{RedeemScheduled} event, linking transactions via the ``message number''—a unique sequential ID generated by Arbitrum's bridge contract.

In the second phase, we examine whether a pair of native bridge transactions matches the address involved in the cross-chain arbitrage and falls within its time window. These transactions must also involve the same token used as the output of the first swap and the input of the second. If these conditions are met, we classify the arbitrage as bridge arbitrage; otherwise, as inventory arbitrage.

Note that we only consider native bridge transactions from Ethereum to rollups—not the other way around. This is because bridging to rollups typically takes a few minutes, while bridging back to Ethereum can take several days, making it impractical for cross-chain arbitrage due to the likely disappearance of price discrepancies over such a long period. Also note that the unique identifier method cannot be applied to Polygon, as its native bridge does not emit identifiable markers. For Polygon and multi-chain bridges, we instead rely on the token transfer-based method.

\textbf{Token Transfer.} This method links bridge transactions using ERC-20 token transfer events generated during the locking and minting process on either side of the bridge. While less precise than the unique identifier approach, it is more broadly applicable—especially for bridges that do not emit unique identifiers. We search for token transfers occurring between the two legs of an arbitrage and link them based on matching sender and receiver addresses involved in the detected cross-chain arbitrage.

We begin by scanning the source chain—where the first arbitrage leg occurs—block by block for transfer events involving the token last swapped in that leg. The search ends when we find a transfer where the sender matches the address that received the swapped tokens.
On the destination chain—where the second arbitrage leg takes place—we perform a reverse block-by-block search, starting from the second leg's block and moving backward. We look for transfer events involving the token first swapped in this leg, stopping when the recipient matches the address that initiated the second swap.

If matching token transfer events are found on both chains, as defined by the criteria above, we consider them part of the same bridge transaction. We then label the corresponding arbitrage as bridge arbitrage; if no such match is found, we classify it as inventory arbitrage.

\subsubsection{Profit Calculation}
For each matched transaction pair, we compute the \emph{net profit} in USD as
\begin{equation}
  \text{NetProfit}
  \;=\;
  \bigl(\text{USD}_{\mathrm{out,\,leg\,2}}
        -\text{USD}_{\mathrm{in,\,leg\,1}}\bigr)
  - \text{Costs},
\end{equation}
where \textbf{Costs} comprise
\begin{enumerate}
  \item gas fees for both swap transactions,
  \item any direct coinbase tips to the block builder\footnote{A coinbase transfer is an explicit ETH payment sent to the block’s coinbase address.},
  \item and, when present, bridge fees.  %
        (Only the source-chain call is user-initiated; the bridge operator typically submits the destination-chain leg.)
\end{enumerate}

\noindent
Gross revenue is simply the USD difference between the output asset of the second swap and the input asset of the first swap; subtracting the costs yields the net profit.

\subsection{Datasets}
We now describe the datasets we curated for our empirical analyses.

\subsubsection{Trading and Price Data}
We processed more than \SI{530}{million} trade transactions executed on nine blockchains over one year. Querying each chain via its own \gls{rpc} would be prohibitively costly, so we relied on Allium \cite{allium}, which provides transaction-level blockchain data. From Allium, we extracted swap and aggregator protocol events. Detecting swap events across multiple blockchains is inherently challenging due to the diversity of \glspl{dex} and aggregators, many of which use non-standard event signatures. Allium’s protocol tagging revealed \num{104} distinct protocols involved in the cross-chain matches, giving broad coverage. For every trade, we also collected metadata such as transaction fees and, when present, direct coinbase payments. USD volumes were computed with Allium’s hourly token price model \cite{allium_price}. We obtained per-second ETH and BTC candlestick data from Binance \cite{binance_market_data}, and used it to compute the assets' daily price volatility.

\subsubsection{Bridge Interactions}
Our token transfer-based bridge detection method leverages token transfer events extracted via Allium's raw logs dataset.
As our detection method is very generic and does not reveal the specific bridging protocol used, we rely on Allium’s labeled bridge dataset to identify the underlying bridge for each transaction—something not possible through token transfers alone. However, it is important to note that Allium's dataset covers only a subset of bridges and hence does not provide labels for all transactions.

\subsubsection{Smart Contract Labels}
To distinguish cross-chain arbitrage \gls{mev} bot contracts from unrelated contracts, we assembled a negative set of \SI{1285481}{} unique addresses. Sources include \gls{defi} protocol labels by Allium and Dune Analytics \cite{dune}, known non-\gls{mev} contract lists \cite{winnsterx}, and addresses of market makers we have identified.

\subsubsection{Ethereum Mempool Data}
We used Mempool Dumpster dataset by Flashbots \cite{mempool_dumpster}  to identify private Ethereum transactions. This dataset includes entries for Ethereum transactions observed by node providers in the mempool before being included in a block. Transactions missing from this dataset were likely privately relayed to Ethereum block builders through endpoints such as Flashbots Protect \cite{fbprotect} or MEV Blocker \cite{mevblocker}.

\subsection{Limitations}\label{limitations}
Our cross-chain arbitrage detection methodology identifies only two-hop cycles and relies on strict heuristics (marginal difference, temporal window, entity match), so results are a lower bound. Recently introduced \gls{defi} or bridge protocols not tracked by Allium are missing, and Allium's pricing model may lack data for less popular tokens, preventing USD volume estimates for those pairs. Inventory risk is not priced in arbitrage profits. Finally, Ethereum mempool data is limited by Mempool Dumpster's network coverage.

\subsection{Results}
We analyze the cross‑chain arbitrages detected by our methodology.

\subsubsection{Cross-Chain Arbitrage Landscape}\label{overview}

From September 2023 to August 2024, we identify \SI{242535}{} cross-chain arbitrages executed across nine blockchains. Collectively, they account for approximately \SI{868.64}{million} USD in trading volume, generating \SI{10.05}{million} USD in revenue, and yield \SI{8.65}{million} USD as net profit for arbitrageurs. We find that most activity (\SI{58.35}{\%}) occurs between an \gls{l1} and \gls{l2}, potentially due to the additional availability of native bridges. In contrast, \gls{l2}-\gls{l2} arbitrages (\SI{35.67}{\%}) must rely on third-party multi-chain bridges or pre-positioned inventory.

By count, Arbitrum hosts the largest share of arbitrage transactions (\SI{20.35}{\%}). However, Ethereum dominates in volume with \SI{36.44}{\%} and appears in all four highest-volume arbitraged chain-pairs, which also deliver the highest profits (see \Cref{tab:all_chain_pairs_metrics} in \Cref{chain_pairs_all}). Ethereum's prominence can be attributed to the substantial liquidity available on its \glspl{dex} \cite{defillama_eth} and its central role as the settlement layer for finality of many rollups that offer native bridges.

We track daily cross‑chain‑arbitrage volume and each chain’s share to uncover trends. \Cref{fig:daily_vol_and_vol_percentage_per_chain} shows that arbitrage activity gains traction over time. We find that average daily volume grows by 5.5$\times$, with a noticeable surge for Ethereum and several rollups beginning in March 2024. This jump aligns with Ethereum’s Dencun upgrade (March 13, 2024) \cite{dencun}, which introduced blob‑carrying transactions and cut rollup transaction fees—dropping, for example, from \SI{0.5}{USD} to \SI{0.003}{USD} on Base \cite{noauthor_base_nodate}—and thus making rollup-based cross-chain arbitrage more attractive.

\begin{figure}[b!]
    \centering
    \includegraphics[width=\linewidth]{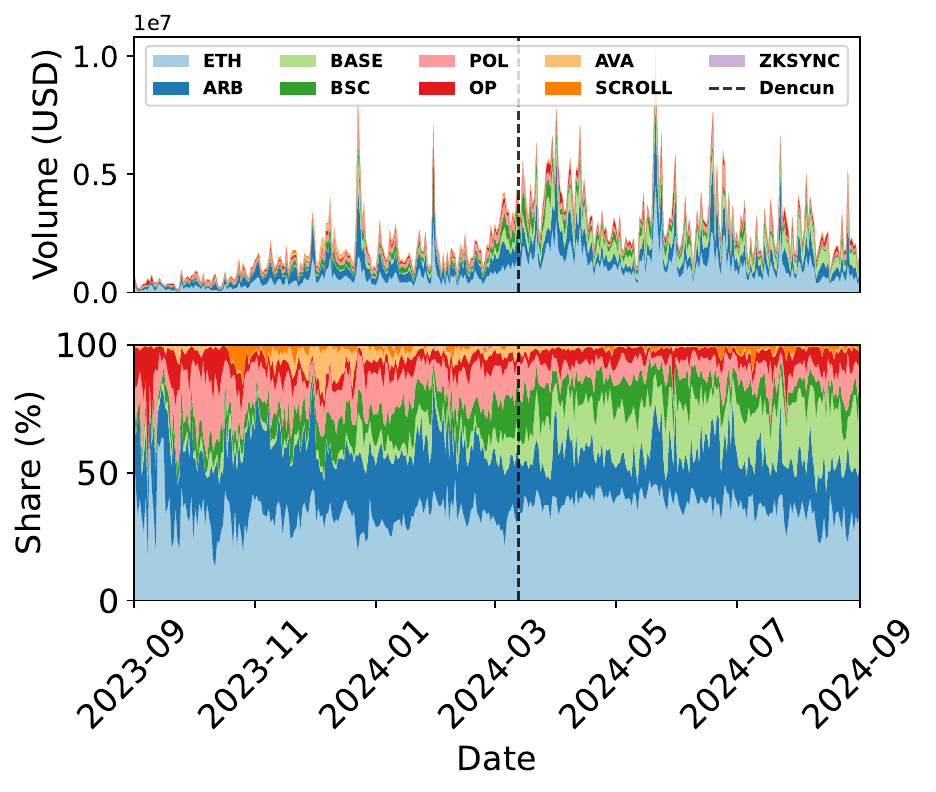}
    \caption{Daily USD cross‑chain arbitrage volume (top) and per‑chain share (bottom). The black dashed line marks Ethereum’s Dencun upgrade on March 13, 2024.}
    \label{fig:daily_vol_and_vol_percentage_per_chain}
\end{figure}

We assess Dencun’s impact on three daily aggregates—mean fee, trade count, and USD volume—using two-sided Welch $t$-tests (full results in \Cref{tab:activity_dencun_change}). Average per‑arbitrage cost falls from \SI{6.27}{USD} to \SI{4.61}{USD} (\(t_{365}=5.43,\;p=1.06\times10^{-7}\)); daily trade count rises from \SI{548}{} to \SI{788}{} (\(t_{350}=-8.63,\;p=2.19\times10^{-16}\)); and volume more than doubles from \SI{1.56}{million} to \SI{3.27}{million} (\(t_{314}=-11.9,\;p=2.93\times10^{-27}\)). These results show that the Dencun upgrade coincides with a clear reduction in per-trade costs and a large, statistically significant expansion in cross-chain-arbitrage activity.

\Cref{fig:daily_vol_and_vol_percentage_per_chain} also reveals rapid growth on Base. 
Besides the fee reductions via Dencun—which affect Base slightly more because it posts raw data, whereas, e.g., Arbitrum posts compressed calldata \cite{noauthor_understanding_2024}—liquidity on its \glspl{dex} also increases. Aerodrome’s \cite{aerodrome_dex} monthly volume on Base leaps from \SI{229}{million} USD in February 2024 to \SI{1.39}{billion} USD in March \cite{defillama_aero}, likely drawing arbitrageurs.


\begin{table}[t!]
\centering
\footnotesize     
\resizebox{\columnwidth}{!}{%
\begin{tabular}{lrrrrrrrr}
\toprule
\textbf{Metric} & \textbf{Pre} & \textbf{Post} & $\Delta$ & \textbf{95\% CI} & $t$ & $df$ & $p$ & $d$ \\
\midrule
Fee (USD)    & 6.27 & 4.61 & -1.66 & $[-2.26,-1.06]$              &  5.43 & 365 & 1.06e‑7  & -0.56 \\
Count        & 548  & 788  &  240  & $[186,295]$                  & -8.63 & 350 & 2.19e‑16 &  0.88 \\
Volume (USD) & 1.56e6 & 3.27e6 & 1.7e6 & $[1.42,1.99]\!\times\!10^{6}$ & -11.91 & 314 & 2.93e‑27 &  1.27 \\
\bottomrule
\end{tabular}}
\caption{Daily averages for cross-chain arbitrage fee, trade count, and volume before and after the Dencun upgrade (March 13, 2024), with mean change $\Delta$, 95\% \glspl{ci}, Welch $t$, $p$, and Cohen’s $d$.}
\label{tab:activity_dencun_change}
\end{table}

We observe several short-lived spikes in arbitrage volume, typically originating by a handful of token pairs, as illustrated in \Cref{fig:daily_token_volumes}. OMNI—the first token built on LayerZero's bridging technology \cite{omnicat}—drives activity following its launch on December 22, 2023 but collapses once a \gls{cex}, MEXC \cite{mexc-docs}, lists it on December 28. On the other hand, activity on BTC-ETH pairs
surges mid-May 2024, after Aerodrome lists tBTC—a bridged Bitcoin token introduced by the Threshold Network \cite{tbtc-docs} which alone accounts for \SI{69.19}{\%} of all BTC-ETH arbitrages. These cases highlight that better interoperability (e.g., via bridgeable tokens) can incentivize traders to engage in cross-chain arbitrage, even for major asset pairs, provided profitable market conditions exist. We present further details on arbitraged token pairs in \Cref{tokens}.

Beyond these token-specific bursts, daily cross-chain-arbitrage volume appears to track broader market volatility. Using ETH and BTC as proxies \cite{heimbach_non-atomic_2024}, we compute each day’s log high-to-low price ratio and plot it against the daily cross-chain arbitrage volume. \Cref{fig:daily_volume_and_volatility} reveals moderate yet highly significant Pearson correlations with arbitrage volume (ETH: $r=0.42,\;p=5.18\times10^{-17}$; BTC: $r=0.32,\;p=4.68\times10^{-10}$). This pattern suggests that price swings in major assets possibly propagate to other tokens through second-order effects, creating cross-chain arbitrage opportunities. 

\begin{figure}[b!]
    \centering
    \includegraphics[width=\linewidth]{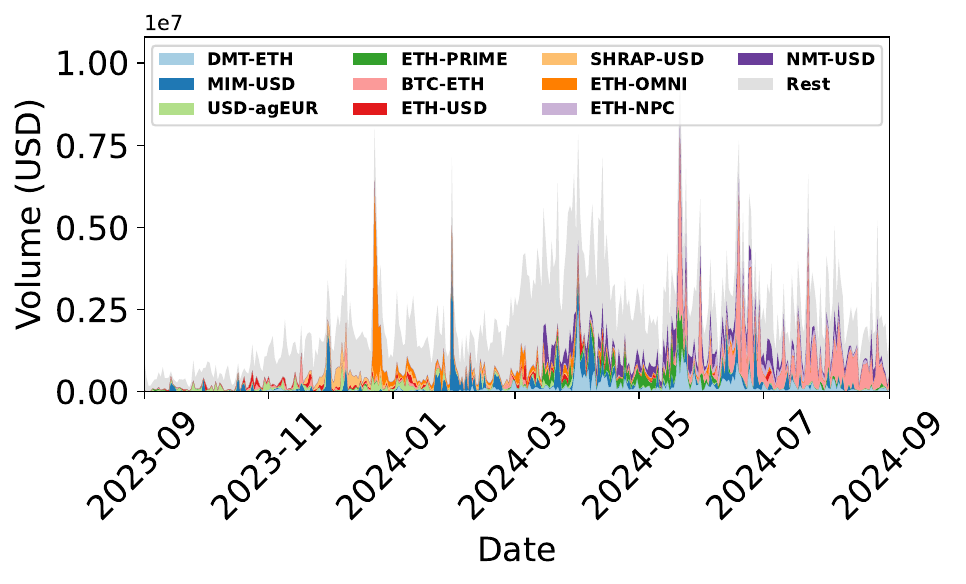}
    \caption{Daily USD volume for the ten token pairs with the highest cumulative arbitrage volume; all other pairs are grouped as ``Rest.''}
    \label{fig:daily_token_volumes}
\end{figure}

\subsubsection{Execution Methods: Inventory and Bridges}
We find that \SI{66.96}{\%} of cross-chain arbitrages rely on pre-positioned inventory across chains rather than transferring traded assets through a bridge. Among bridge-based trades, \SI{71.91}{\%} use multi-chain bridges while the remainder adopt native \gls{l1}-\gls{l2} bridges. \Cref{tab:chain_exec_types} breaks these figures down by chain‑pair category (i.e., \gls{l1}-\gls{l1}, \gls{l1}-\gls{l2}, \gls{l2}-\gls{l2}) and execution method, showing that in every category inventory arbitrage dominates\footnote{Our dataset labels 42 arbitrages between two \glspl{l2} as ``native‑bridge'' as each trade routes assets through its native bridge to Ethereum before reaching the destination \gls{l2}.}. 

\begin{table}[b!]
\centering
\footnotesize
\begin{tabular}{lrrrrrrr}
\toprule
\textbf{Chain-Pair} 
& \multicolumn{2}{c}{\textbf{Inventory}} 
& \multicolumn{2}{c}{\textbf{Multi-chain}} 
& \multicolumn{2}{c}{\textbf{Native}} \\
\cmidrule(lr){2-3} \cmidrule(lr){4-5} \cmidrule(lr){6-7}
Category
& Count & \%
& Count & \%
& Count & \% \\
\midrule
L1--L1 & \SI{12533}{} & \SI{86.52}{\percent}
       & \SI{1953}{}  & \SI{13.48}{\percent} 
       & \SI{0}{}     & \SI{0.00}{\percent}  \\
L1--L2 & \SI{93115}{} & \SI{65.79}{\percent}
       & \SI{25948}{} & \SI{18.33}{\percent} 
       & \SI{22464}{} & \SI{15.87}{\percent} \\
L2--L2 & \SI{56755}{} & \SI{65.59}{\percent} 
       & \SI{29725}{} & \SI{34.36}{\percent}  
       & \SI{42}{}    & \SI{0.05}{\percent} \\
\bottomrule
\end{tabular}
\caption{Distribution of cross‐chain arbitrages by chain-pair category and execution method.}
\label{tab:chain_exec_types}
\end{table}

\begin{figure}[t!]
    \centering
    \includegraphics[width=\linewidth]{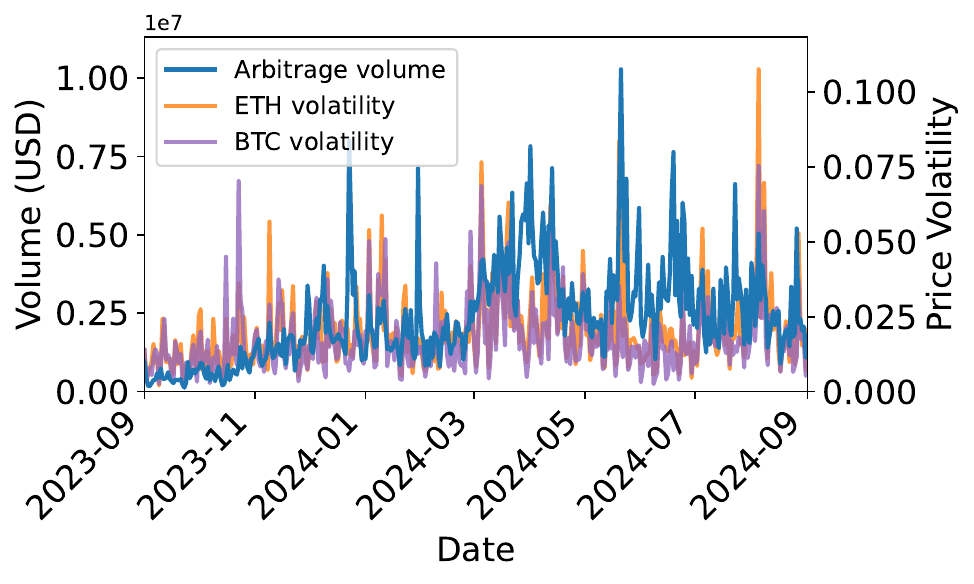}
    \caption{Daily USD cross-chain arbitrage volume against ETH and BTC price volatility.}
    \label{fig:daily_volume_and_volatility}
\end{figure}

A plausible driver of this inventory bias is the time required to complete both legs of the trade. To test that hypothesis, we measure the settlement time—the interval between the two legs of an arbitrage—and plot its distributions in \Cref{fig:inv_vs_bridge_duration}. Inventory trades settle fastest, with a \SI{9}{\second} median, followed by multi-chain‑bridge trades at \SI{162}{\second}, and native‑bridge trades lagging far behind at \SI{1030}{\second}, yielding an overall \SI{242}{\second} for bridge arbitrage. In other words, bridges impose a considerable \emph{latency cost}. A breakdown confirms that this cost is driven almost entirely by the asset-transfer leg: the interval between the outbound and inbound bridge transactions accounts for \SI{88.23}{\%} of total settlement time in native-bridge arbitrages and \SI{71.16}{\%} in multi-chain-bridge arbitrages. These delays likely explain why traders prefer the recurring cost of holding inventory over the opportunity cost of slow bridging.

We notice that settlement times for native‑bridge arbitrages show a bimodal pattern, with a secondary peak near \SI{1200}{\second}. More than half of these trades (\SI{11314}{}, \SI{50.27}{\%}) occur between Ethereum and Polygon—the most frequently arbitraged chain-pair (see \Cref{tab:all_chain_pairs_metrics} in \Cref{chain_pairs_all})—and settle in a median \SI{1248}{\second}. This latency possibly stems from Polygon’s inherently lengthy checkpoint‑based bridge design for securing chain's consistency and integrity \cite{polygonbridge-docs}. By contrast, only \SI{687}{} multi-chain‑bridge trades (\SI{1.19}{\%}) arbitrage the same pair, despite settling faster (\SI{415}{\second}). We discover that token availability appears to be driving this choice. Across the four most-arbitraged \gls{l1}-\gls{l2} pairs that offer native bridges\footnote{Ethereum-Scroll and Ethereum-ZKsync are omitted because they contribute negligible arbitrage activity.}, the asymmetry in bridge use correlates almost perfectly with the share of non-overlapping token pairs (Pearson \(r=0.98,\;p=0.02\)). Ethereum-Polygon sits at the extreme: \SI{94.27}{\%} of bridge-trades use the native one and \SI{97.81}{\%} of token pairs are exclusive to either the native or a multi-chain bridge, but not both. An arbitrageur without inventory must therefore pick whichever bridge supports the desired token—accepting potentially slower settlement (native) or counter-party risk (multi-chain)—constraints that inventory-based strategies avoid.

\begin{figure}[t!]
    \centering
    \includegraphics[width=\linewidth]{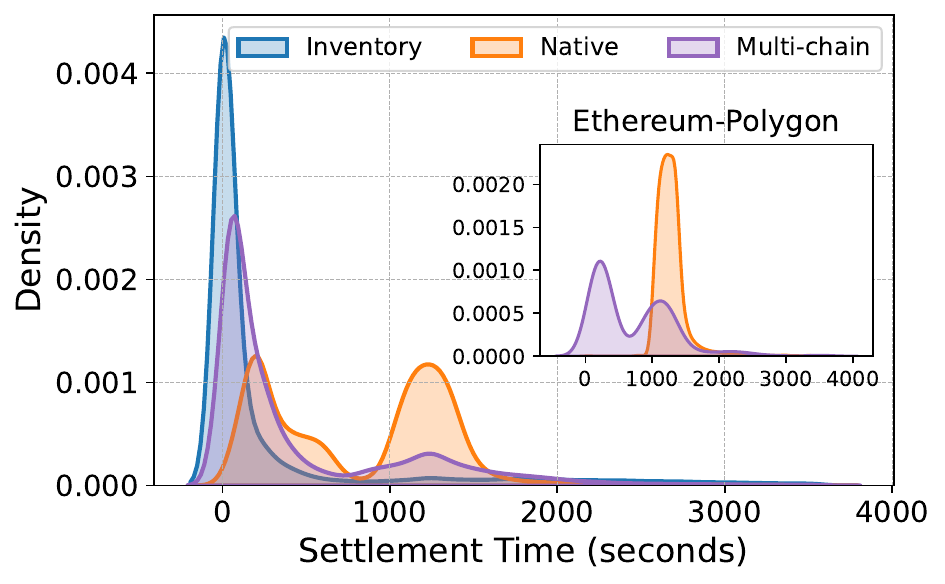}
    \caption{Distribution of settlement times for inventory, native-bridge, and multi-chain-bridge arbitrage. The inset highlights Ethereum-Polygon native- and multi-chain-bridge cases.}
    \label{fig:inv_vs_bridge_duration}  
\end{figure}

To examine competition, we look into how often inventory and bridge arbitrages are sent through private mempools—channels that hide transaction details from public (e.g., against frontrunning attacks) and offer features such as revert protection and bundling \cite{fbprotect, mevblocker}. However, measuring direct competition is difficult as most rollups lack mempool visibility. We therefore use Ethereum as a proxy since it hosts the highest-volume of activity and offers a public mempool. We observe \SI{35966}{} bridge arbitrages, of which \SI{42.81}{\%} adopt private mempools for the swap transaction—just over half of those (\SI{52.17}{\%}) also relay the bridge transaction privately—compared to only \SI{26.23}{\%} of \SI{52298}{} inventory arbitrages. The heavier reliance on private channels for bridge trades can imply higher competition, which is expected as capital requirements are lower, hence, there is a reduced barrier to entry. This can prompt traders to conceal their strategies and use bundles to ensure atomic execution for the swap and the bridge transaction so that either both succeed or both revert, mitigating partial execution risk.

We present summary statistics for bridge protocols appearing in our study in \Cref{bridge_protocols}.

\subsubsection{Who Controls Cross‑Chain Arbitrage?}
We observe a long tail of entities—\SI{9064}{} distinct addresses—engaged in cross‑chain arbitrage. Yet activity is highly skewed: \Cref{fig:searcher_cdf} shows that the top five addresses execute just over half of all trades, and the top four together command half of the total volume. Most strikingly, one address (\texttt{0xCA74}) alone accounts for more than a third of all volume.

To track concentration dynamics, we plot daily arbitrage activity for the largest traders with at least \SI{1}{\%} of total volume (see \Cref{tab:arber_metrics} in \Cref{arbers} for detailed statistics). \Cref{fig:arber_count_and_volume} reveals that \texttt{0xCA74} becomes progressively more dominant, likely benefiting from the Dencun upgrade: the arbitrageur's share climbs from \SI{20.3}{\%} before Dencun to \SI{39.7}{\%} afterwards (\(t_{340}=-18.6,\;p=8.14\times10^{-54}\)). Over the same window it more than doubles its daily trade count (\SI{79.4}{} to \SI{175}{}, \(t_{341}=-17.3,\;p=9.44\times10^{-49}\)) and increases its daily volume five‑fold (\SI{0.35}{million} to \SI{1.3}{million}; \(t_{238}=-14,\;p=6.52\times10^{-33}\)), while its mean fee per trade falls from \SI{10.0}{USD} to \SI{6.57}{USD} (\(t_{364}=5.42,\;p=1.08\times10^{-7}\)). Full statistics for \texttt{0xCA74}'s activity before and after Dencun are provided in \Cref{tab:top_searcher_dencun_change} in \Cref{arbers}.

We notice that other large actors demonstrate intermittent patterns. For instance, \texttt{0x6226}—second by trade count—avoids trades involving Ethereum, thereby limiting its volume. By contrast, \SI{74.46}{\%} of \texttt{0xCA74}’s arbitrages include Ethereum, suggesting that meaningful market dominance demands the capacity to commit large capital to high‑volume pairs. This capital barrier sets cross‑chain arbitrage apart from single‑chain atomic arbitrage, where flash loans eliminate the need for upfront funds.

We find that 
only three out of the eleven leading arbitrageurs submit trades through \gls{mev} smart contracts, and the rest act directly via \glspl{eoa}. This pattern likely stems from the non-atomic nature of cross-chain arbitrage: a trader cannot embed on-chain logic that both detects an opportunity across multiple chains and executes only if it remains profitable, as is common in single-chain atomic arbitrage. Since no protocol yet guarantees cross-chain atomic execution, traders must build off-chain infrastructure to monitor markets and deploy transactions at the right moment—bearing latency and execution risk. This engineering burden may form a competitive moat and further concentrate the market.

\begin{figure}[t!]
  \centering
  \includegraphics[width=\linewidth]{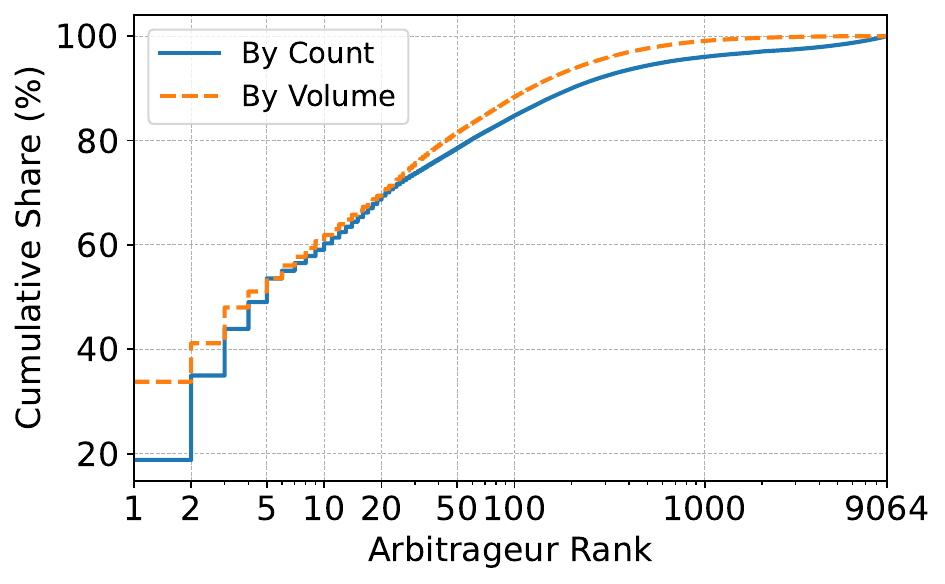}
  \caption{Cumulative distribution of cross‑chain arbitrage count (solid blue) and volume (dashed orange) by arbitrageur.}
  \label{fig:searcher_cdf}
\end{figure}

\begin{figure}[t!]
  \centering
  \includegraphics[width=\linewidth]{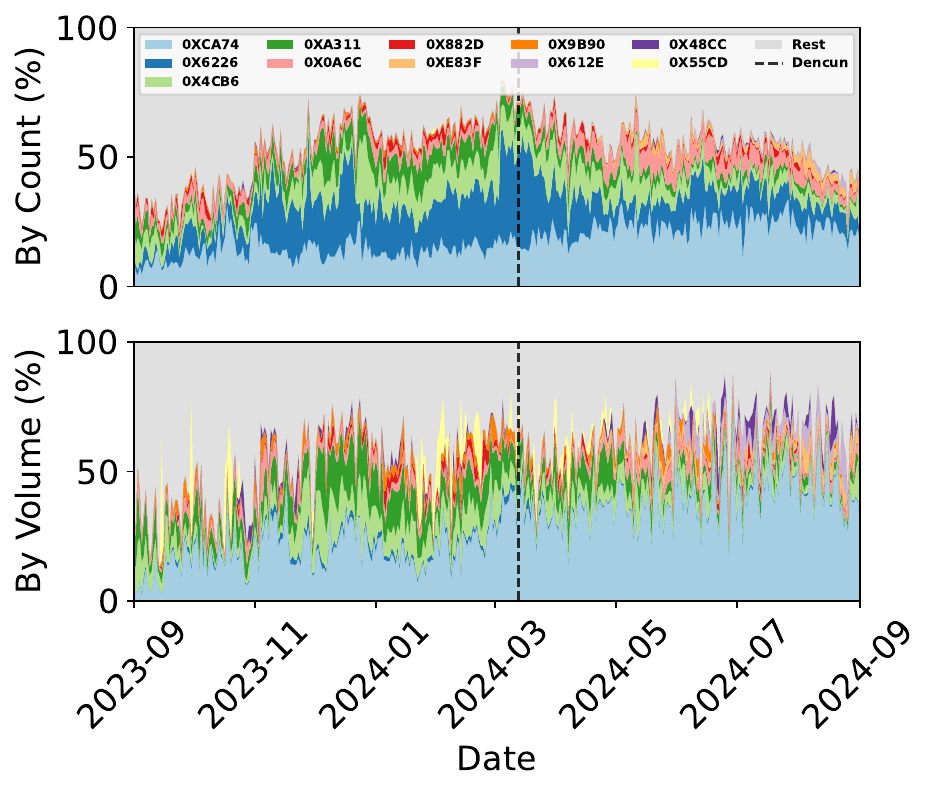}
\caption{Daily share of cross-chain-arbitrage count (top) and volume (bottom) for addresses with at least 1 \% of total volume; all remaining addresses are grouped as ``Rest.'' The black dashed line marks Ethereum’s Dencun upgrade on March 13, 2024.}
  \label{fig:arber_count_and_volume}
\end{figure}
\section{Discussion}
Cross-chain arbitrage 
is mainly lucrative to actors who can minimize execution risks—especially latency—and manage inventory across chains. Sequencing control addresses the first challenge: an entity that determines block order on several chains can ensure its multi-chain transaction bundle lands exactly as intended. The second challenge demands trading skill and capital to deploy the right size of inventory on the right chain at the right time.

Together, these requirements create a powerful incentive for \emph{vertical integration}. \gls{hft} shops and sophisticated \gls{mev} searchers (e.g., \texttt{0xCA74}) can pair their technical know-how with the sequencing leverage of infrastructure operators like centralized rollup sequencers or large staking providers running validators across many \gls{pos} chains. A similar dynamic is already visible on Ethereum, where several \gls{hft} firms now run block builders to execute their strategies with minimal delay \cite{yang2024decentralizationethereumsbuildermarket,oz_et_al:LIPIcs.AFT.2024.22,bahrani_centralization_2conf024,gupta_centralizing_2023, heimbach_non-atomic_2024}. By merging searcher logic with sequencing control, these alliances concentrate both physical infrastructure (clustered in few data-centers) and economic power, creating a feedback loop: \emph{sequencing advantages yield higher \gls{mev}, which funds still more sequencing reach}. The result is a small set of centralized gatekeepers, which introduces three systemic hazards: \emph{censorship risk}—economically motivated suppression of competing \gls{mev} bundles or compliance with regulatory sanctions \cite{wahrstatter_blockchain_2023}; \emph{liveness risk}—a centralized sequencer's outage or block withholding can stall multiple chains simultaneously; and \emph{finality risk}—multi-chain ``time-bandit'' re-orgs that rewrite history to extract additional \gls{mev} from past blocks \cite{obadia_unity_2021}.

\smallskip
\noindent\textbf{Decentralized block building.}  
To mitigate the risks that follow from \gls{mev}-driven economic and infrastructural concentration, block building must be decentralized. On \glspl{l1} like Ethereum—with already large validator sets—several in-protocol proposals tackle the problem \cite{thiery_fork-choice_2024, cryptoeprint:2025/194, fox_et_al:LIPIcs.AFT.2023.19, neuder_concurrent_2024}. They 
seek to break a single block proposer's monopoly by (\textbf{i}) obliging the proposer to include transactions supplied by other validators (e.g., through inclusion lists \cite{noauthor_eip-7547_nodate}), or (\textbf{ii}) assigning multiple proposers to the same slot and merging their blocks deterministically (e.g., by priority-fee order). Outside the protocol, BuilderNet \cite{buildernet} offers a complementary approach: a decentralized network of \gls{tee}-backed builders that share the same order flow, so a transaction censored by one builder can still be included by another \cite{buildernet}. Analogous solutions apply to rollups, which now rely on a single sequencer.
One path is the based rollup model, which delegates block building to Ethereum validators \cite{noauthor_based_2023}; this inherits Ethereum’s security and censorship resistance, but also its throughput bottleneck. An alternative is a shared sequencer network such as Espresso \cite{noauthor_espresso_doc_nodate}, which replaces the single rollup sequencer with a decentralized ordering layer: the network only orders transactions—leaving execution to the rollup—thereby reducing unilateral control while maintaining high throughput.

\smallskip
\noindent\textbf{Lowering entry barriers.}  
A complementary way to counteract concentration is to broaden participation in cross-chain arbitrage. When \gls{mev} extraction becomes less technically demanding, competition increases and profit margins narrow—as seen in single-chain atomic arbitrage \cite{taleoftwoarbs}. Reduced margins can weaken the incentive for any infrastructure operator to form an exclusive alliance with an individual \gls{mev} searcher. 

Lowering barriers requires cutting both inventory overhead and execution latency. \emph{Atomic, instant bridging} across \glspl{l1} and \glspl{l2} would let traders move assets inside the same bundle, so they can shift funds just-in-time instead of pre-positioning and continuously rebalancing them. This removes bridge delay and unlocks cross-chain flash loans—a trader can borrow on Chain A, execute leg 1, atomically bridge, execute leg 2 on Chain B, and repay, all in one bundle. A non-atomic variant already exists (borrow on Chain A against collateral, trade on both chains, then bridge back to repay and reclaim the collateral), as illustrated in \Cref{fig:loan_arb} and detailed in \Cref{loan-arb}. By reducing capital and operational hurdles, atomic bridging widens participation and helps neutralize the centralizing forces discussed above.

\section{Related Work}
Qin et al. \cite{qin2022quantifying} provided one of the first comprehensive quantification of \gls{mev}, documenting systematic value extraction through sandwich attacks, liquidations, and \gls{dex} arbitrage on Ethereum. 
Building up on the theoretical work presented by Eskandari et al. \cite{EskandariMC19}, Torres et al. \cite{frontrunner_jones_torres_2021} presented the first empirical quantification of frontrunning strategies.
McLaughlin et al. \cite{mclaughlin} advanced this understanding by developing an application-agnostic arbitrage detection methodology using standardized ERC-20 transfer events, enabling analysis across a broader range of decentralized exchanges. Torres et al. \cite{rollinginshadows_torres} extended \gls{mev} research into \gls{l2} solutions by investigating \gls{mev} activities across Arbitrum, Optimism, and ZKsync, revealing comparable trading volumes to Ethereum on rollups, albeit with significantly lower profits despite reduced costs. Öz et al. \cite{fcfsmevalgorand} provided empirical evidence of \gls{mev} extraction on \gls{fcfs} blockchains. The analysis of non-atomic arbitrage has been advanced by Heimbach et al. \cite{heimbach_non-atomic_2024}, who demonstrated that around \SI{25}{\%} of volume on Ethereum's top five \glspl{dex} could be attributed to arbitrage between \glspl{cex} and \glspl{dex}, revealing substantial centralization among a small group of dominant \gls{mev} searchers. Obadia et al. \cite{obadia_unity_2021} provided the first formal definition of cross-domain \gls{mev} by introducing domains as self-contained systems with shared states and showing how cross-domain \gls{mev} opportunities incentivize sequencer collusion across chains. McMenamin \cite{mcmenamin2023sokcrossdomainmev} systematized cross-domain \gls{mev} by categorizing extractable value based on extraction methods and value origins while analyzing different protocols' \gls{mev} mitigation capabilities. Mazor et al. \cite{mazor_empirical_2023} examined hypothetical cross-chain arbitrage opportunities between PancakeSwap on Binance Smart Chain and QuickSwap on Polygon, providing insights into the magnitude of extractable arbitrage value and the typical duration of these opportunities. Gogol et al. \cite{gogol_cross-rollup_2024} analyzed price disparities in WETH-USDC trading pairs between Ethereum and major \gls{l2} rollups, showing that arbitrage opportunities persisted for multiple blocks with relatively limited profits ranging from \SI{0.03}{\%} to \SI{0.25}{\%} of trading volume.  Mazorra et al. \cite{Mazorra2022PriceOM} provided theoretical foundations by developing an abstract framework for analyzing \gls{mev} games across domains, particularly focusing on network characteristics' influence on strategies and congestion. Finally, Öz et al.'s \cite{oz_et_al:LIPIcs.AFT.2024.22} and Yang et al.'s \cite{yang2024decentralizationethereumsbuildermarket} analyses on block building revealed how private order flow access by select builders leads to market centralization in Ethereum's block building, highlighting similar centralization risks that could emerge from privileged access to cross-chain \gls{mev} opportunities.

\begin{figure}[t!]
    \centering
    \includegraphics[width=\linewidth]{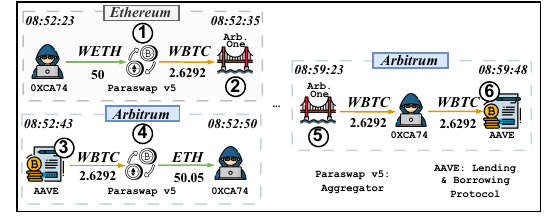}
    \caption{A loan-backed cross-chain arbitrage between Ethereum and Arbitrum. Full transaction details in \Cref{loan-arb}.}
    \label{fig:loan_arb}  
\end{figure}

\section{Conclusion}
As trading migrates fully on-chain across multiple networks, cross-chain arbitrage is poised to become \gls{defi}'s primary mechanism for keeping prices aligned across chains. We examine this mechanism with a profit–cost model contrasting inventory- and bridge-based execution, and with a year-long measurement across nine blockchains. The model pinpoints opportunity frequency, bridge time, and token depreciation as the key profitability drivers; empirically, daily volume grew 5.5$\times$ over the study window and spiked after Ethereum's Dencun upgrade. Yet bridges still impose a sizable latency cost: \SI{66.96}{\%} of arbitrages use inventory and settle in \SI{9}{\s}, whereas bridge trades take \SI{242}{\second}. Activity is concentrated as well—five addresses generate over half of all trades, and one captures nearly \SI{40}{\%} of daily volume post-Dencun. Because this \gls{mev} stream rewards vertical integration between searchers and infrastructure operators such as sequencers or staking providers, it risks centralizing sequencing infrastructure and economic power, amplifying censorship, liveness, and finality threats. Decentralizing block building and lowering entry barriers—thereby undercutting exclusive sequencer–searcher deals—are critical countermeasures. We hope these findings inform designs that curb such risks as the ecosystem advances into an ever more multi-chain future.

\section*{Acknowledgements}
This work is supported by the European Union’s Horizon 2020 research and innovation programme under grant agreement No 952226, project BIG (Enhancing the research and innovation potential of Técnico through Blockchain technologies and design Innovation for social Good).
We extend our gratitude to Danut Ilisei, Murad Muradli, and Thomas Wagner for their contributions. We also thank Quintus Kilbourn and Sarah Allen for their valuable feedback.

\bibliographystyle{ACM-Reference-Format}
\bibliography{sample-base}

\appendix
\section{Cost of Inventory with Bounded Liquidity}\label{coi_bounded}
The formulae for the cost of liquidity calculation can be adapted to the case of bounded liquidity in the second market, as we sketch next:

Suppose the second market has bounded liquidity and is also operated as a CPMM with reserves $\tilde{R}_t^A$ and $\tilde{R}_t^B$. A good approximation of a CPMM is that of quadratic trading cost. More specifically, define the trading cost of a CPMM to be the difference in exchanged tokens between a perfectly liquid market and the CPMM,
$$\tilde{C}_t(x):=Q_t x-\frac{\tilde{R}_t^Ax}{\tilde{R}_t^B+x}=\frac{Q_t x(\tilde{R}_t^B+x)-\tilde{R}_t^Ax}{\tilde{R}_t^B+x}=\frac{Q_t x^2}{\tilde{R}_t^B+x}.$$
Then for small trade sizes relative to the total liquidity in the pool, we can approximate trading cost by
$$\tilde{C}_t(x)\approx \frac{Q_t x^2}{\tilde{R}_t^B}.$$
The cost of inventory has now a cost term:
\begin{align*}C(I)&=E_0[\int_0^\tau \frac{Q_t}{\tilde{R}_t^B}(dI_t)^2-\int_0^\tau I_tdQ_t]\\&=E_0[\int_0^\tau \frac{Q_t}{\tilde{R}_t^B}\sigma^2\left(\frac{\partial I_t}{\partial Q_t}\right)^2dt-\int_0^\tau I_tdQ_t].\end{align*}
For the case of reserves $$R_t^B=\frac{Q_0^k}{Q_t^k}R_0^B$$
and a constant multiply of reserves in the second AMM, $\tilde{R}_t^A=\phi R_A$ and $\tilde{R}_t^B=\phi R_B$, we obtain for example:
\begin{align*}\frac{C(I)}{E_0[Q_{\tau} I_{\tau}]}&=\frac{\lambda+(1-k)(\tfrac{1}{2}k\sigma^2-\mu)}{\lambda R_0^A}E_0[\int_0^\tau \frac{Q_t}{\phi R_t^B}\left(\frac{kR_t^B}{Q_t}\right )^2\sigma^2dt]--\frac{\mu}{\lambda}\\&=\frac{\lambda+(1-k)(\tfrac{1}{2}k\sigma^2-\mu)}{\lambda R_0^A}E_0[\int_0^\tau \frac{k^2}{\phi Q_t}R_t^B\sigma^2dt]-\frac{\mu}{\lambda}\\&=\frac{1-k}{1+k}\frac{(\lambda/(1-k)-\mu+\tfrac{1}{2}k\sigma^2)k^2\sigma^2}{\phi\lambda Q_0^2(\lambda/(k+1)+\mu-\tfrac{1}{2}\sigma^2(k+2))}-\frac{\mu}{\lambda}.\end{align*}
In the case of $k=0$, we obtain:
$$-\frac{\mu}{\lambda}$$
and in the case of $k=1/2$ we obtain:
$$\frac{1}{3}\frac{(2\lambda-\mu+\tfrac{1}{4}\sigma^2)\tfrac{1}{4}\sigma^2}{\phi\lambda Q_0^2(\tfrac{3}{2}\lambda+\mu-5\sigma^2))}-\frac{\mu}{\lambda}$$
\section{Cross-Chain Arbitrage Examples}\label{examples}
We now breakdown real-world examples of an inventory arbitrage and a bridge arbitrage.

\subsection{Case Study: Inventory Arbitrage}\label{inv-arb}
\Cref{tab:sia} summarizes a real-world example of an inventory arbitrage—see \Cref{fig:genaral-inv-arb} for an illustration. The arbitrageur first swaps \SI{0.9}{WETH} for \SI{1034616.49}{OMNI} on Base at 12:25:51 AM. On Arbitrum, they swap \SI{1034616.49}{OMNI} for close to \SI{1}{WETH}, earning approximately \SI{198.95}{USD} in profit after deducting transaction fees. The entire arbitrage gets completed in \SI{27}{\second}. 
Post-arbitrage, the arbitrageur rebalances their inventory by transferring \SI{1034616.49}{OMNI} from Base to Arbitrum using the LayerZero bridge\footnote{Rebalance transactions: Base \href{https://basescan.org/tx/0x11e0c5acc495159aa935b25a8e868bfe2bcf7cf7833df264168a0e7f745d3dcf}{\text{0x11e...dcf}}, Arbitrum \href{https://arbiscan.io/tx/0x2b855cfddb8059cf6332f07cd8c58637bf19a5dfa7a7d28c3de8a1877bf57dca}{\text{0x2b8...7dca}}.}.

\begin{table}[hbtp]
\centering
\setlength{\tabcolsep}{2pt}
\begin{adjustbox}{width=\columnwidth}
\begin{tabular}{lllllr}
\toprule
\textbf{Transaction Hash} & \textbf{Blockchain} & \textbf{From} & \textbf{To} & \textbf{Type} & \textbf{Time} \\
\midrule
\href{https://basescan.org/tx/0xeb546c545453b6b1ecc6ebf8ad694d4b4dafcc208c330cc505f10b4954037f7c}{\text{0xeb546...37f7c}}
 & Base & \href{https://basescan.org/address/0x4cb6f0ef0eeb503f8065af1a6e6d5dd46197d3d9}{0x4cb...3d9} & \href{https://basescan.org/address/0x6bded42c6da8fbf0d2ba55b2fa120c5e0c8d7891}{0x6bd...891} & Swap & 12:25:51 AM\\
\href{https://arbiscan.io/tx/0xffefbd860c0b9a3219c8961a5566b995ed29286683c146fcc13851856c3aa7aa}{\text{0xffefb...aa7aa}}
 & Arbitrum & \href{https://arbiscan.io/address/0x4cb6f0ef0eeb503f8065af1a6e6d5dd46197d3d9}{0x4cb...3d9} & \href{https://arbiscan.io/address/0x1b02da8cb0d097eb8d57a175b88c7d8b47997506}{0x1b0...506} & Swap & 12:26:18 AM\\
\bottomrule
\end{tabular}
\end{adjustbox}
\caption{Breakdown of an inventory arbitrage between Base and Arbitrum.}
\label{tab:sia}
\end{table}

\subsection{Case Study: Bridge Arbitrage}\label{bridge-arb}
\Cref{tab:sda} summarizes a real-world example of a bridge arbitrage—see \Cref{fig:genaral-bridge-arb} for an illustration. The arbitrageur initially swaps \SI{20000}{USDT} for \SI{2692596.48}{VOW} on Ethereum at 11:13:11 AM. Following this, they approve DBridge to transfer VOW tokens and deposit \SI{2692590}{VOW} \SI{60}{\second} later. 
The arbitrageur receives \SI{2692590}{VOW} on Binance Smart Chain at 11:19:10 AM and, \SI{12}{\second} later, swaps \SI{2692593.08}{VOW} for \SI{134889.18}{} Binance-Peg BSC-USD, earning around \SI{114688.38}{USD} in profit after transaction fees. The entire arbitrage takes over \SI{6}{minutes} to complete.

\begin{table}[hbtp]
\centering
\setlength{\tabcolsep}{2pt}
\begin{adjustbox}{width=\columnwidth}

\begin{tabular}{lllllr}
\toprule
\textbf{Transaction Hash} & \textbf{Blockchain} & \textbf{From} & \textbf{To} & \textbf{Type} & \textbf{Time} \\
\midrule
\href{https://etherscan.io/tx/0x59237caf9f0a47e302e071735df3d4a8835f20d47d00d283967340eb91e61999}{\text{0x59237...61999}} & Ethereum & \href{https://etherscan.io/address/0xe6b65f2e6d8e0c15425bf55fd1fdcdb4b33c3532}{0xe6b...532} & \href{https://etherscan.io/address/0x111111125421ca6dc452d289314280a0f8842a65}{0x111...a65} & Swap & 11:13:11 AM\\
\href{https://etherscan.io/tx/0x64fa3c1d06a94d00ecc4b66fe58acda164f7513c0bc7e5150a79df1eac546aa8}{\text{0x64fa3...46aa8}} & Ethereum & \href{https://etherscan.io/address/0xe6b65f2e6d8e0c15425bf55fd1fdcdb4b33c3532}{0xe6b...532} & \href{https://etherscan.io/address/0xa7c14010afa616fa23a2bb0a94d76dd57dde644d}{0xa7c...44d} & Bridge & 11:14:11 AM\\
\href{https://bscscan.com/tx/0x7330712b229db29174df1d066b0b80eda2814bbfa2e5a54830eb25f0801591ec}{\text{0x73307...591ec}} & BSC$^{\ast}$ & \href{https://bscscan.com/address/0xe6b65f2e6d8e0c15425bf55fd1fdcdb4b33c3532}{0xe6b...532} & \href{https://bscscan.com/address/0xa7c14010afa616fa23a2bb0a94d76dd57dde644d}{0xa7c...44d} & Bridge & 11:19:10 AM\\
\href{https://bscscan.com/tx/0x101d01bd93dde6eb66d177425bc3d04761c9c3ab63f1583dbd7856b5dbccea91}{\text{0x101d0...cea91}}
 & BSC$^{\ast}$ & \href{https://bscscan.com/address/0xe6b65f2e6d8e0c15425bf55fd1fdcdb4b33c3532}{0xe6b...532} & \href{https://bscscan.com/address/0x111111125421ca6dc452d289314280a0f8842a65}{0x111...a65} & Swap & 11:19:22 AM \\  
\bottomrule
$^{\ast}$ Binance Smart Chain.
\end{tabular}
\end{adjustbox}
\caption{Breakdown of a bridge arbitrage between Ethereum and BSC.}
\label{tab:sda}

\end{table}

\section{Blockchain Pairs}\label{chain_pairs_all}

\Cref{tab:all_chain_pairs_metrics} summarizes the cross-chain arbitrage metrics for all blockchain pairs in our study.

\section{Tokens}\label{tokens}

\Cref{tab:top_token_pairs_stats} reports summary statistics for the ten token pairs with the largest cumulative arbitrage volume; together they account for \SI{43.05}{\%} of total volume. The remaining \SI{1858}{} pairs share the other \SI{56.95}{\%}.

\section{Bridges}\label{bridge_protocols}
\Cref{tab:bridge_protocol_stats} summarizes the usage metrics of native and multi-chain bridges for cross-chain arbitrages, excluding those used in less than \SI{0.1}{\%} of all bridge arbitrage. The \textit{Cost Ratio} column shows, for each bridge, the share of total transaction fees attributable to the bridge leg relative to the aggregate fees of the entire arbitrage.

\section{Arbitrageurs}\label{arbers}
\paragraph{0xCA74 Activity.} \Cref{tab:top_searcher_dencun_change} reports the statistics for the cross-chain arbitrage activity of the arbitrageur \texttt{0xCA74}, before and after the Dencun upgrade.
\paragraph{Summary Statistics}
\Cref{tab:arber_metrics} summarizes the cross-chain arbitrage metrics for arbitrageurs executing $\geq$\SI{1}{\%} of the total trade volume; together they account for \SI{63.01}{\%} of total volume. The remaining \SI{9053}{} traders conduct the other \SI{37.99}{\%}.

\section{Case Study: Loan-Backed Arbitrage}\label{loan-arb}
When an arbitrageur lacks pre-positioned inventory, the usual fallback is to bridge assets across chains—incurring delay and execution risk.  
A faster workaround is to \emph{borrow} the required tokens from a lending protocol, execute the arbitrage instantaneously as though the inventory were already in place, and then bridge the proceeds back to repay the loan.  
This loan-backed method replicates inventory arbitrage without the upfront capital commitment, at the cost of paying loan fees and taking temporary price risk during the bridge leg.

\Cref{fig:loan_arb} shows a real-world loan-backed arbitrage executed by \texttt{0xCA74} between Ethereum and Arbitrum.
The process unfolds as follows:
\begin{enumerate}
\itemsep0em 
    \item \texttt{0xCA74} initially swaps \SI{50}{WETH} for \SI{2.629}{WBTC} on Ethereum (\href{https://etherscan.io/tx/0xf9dac896022ab70b8831e4a0139b97738180bf1605f216400f66bb0957f9208a}{0xf9d...08a}).
    \item The WBTC is bridged to Arbitrum in the next block (\href{https://etherscan.io/tx/0x452a9c05cc02fe146ce13a788456f02bf68c83ee3d6a57c0402ccfd8c5108903}{0x452...903}).
    \item \SI{8}{\second} later, \texttt{0xCA74} takes a loan for \SI{2.629}{WBTC} using Aave \cite{aave} (\href{https://arbiscan.io/tx/0xf7e5f28d391c2dee5f982fd27f8caf6fc250acd54ebe3d06ace25ac2a85feeff}{0xf7e...eff}).
    \item The WBTC is then swapped back to ETH, earning \SI{0.5}{ETH} and completing the arbitrage 
    (\href{https://arbiscan.io/tx/0xdd45995d12a9404a667c13ec0ab014b927abeea3cf9cd650a5e8e63845c7972f}{0xdd4...72f}).
    \item The bridged WBTC arrives on Arbitrum around \SI{7}{minutes} after the arbitrage is completed (\href{https://arbiscan.io/tx/0x7015f3ceff63cc5e4c2e7abffe3ed6c1d21da6b3305724ed1c9177e50cf411bf}{0x701...1bf}).
    \item Finally, \texttt{0xCA74} transfers the bridged WBTC to AAVE to close the loan, concluding the strategy (\href{https://arbiscan.io/tx/0x798b5371e00f8a0d5dd7471aa36488e490f853004e638b547f272013b19db935}{0x798...935}).
\end{enumerate}

\begin{table*}[t!]
\centering
\footnotesize
\setlength{\tabcolsep}{4pt}
\begin{tabular}{
  lrrrrrrrrrrr
}
\toprule
\textbf{Chain Pair}
  & \multicolumn{2}{c}{\textbf{Total [USD]}}
  & \multicolumn{2}{c}{\textbf{Average [USD]}}
  & \textbf{Count}
  & \multicolumn{3}{c}{\textbf{Execution Method [\%]}}
  & \multicolumn{3}{c}{\textbf{Settlement Time [s]}} \\
\cmidrule(lr){2-3} \cmidrule(lr){4-5} \cmidrule(lr){7-9} \cmidrule(lr){10-12}
  & Volume   & Profit
  & Volume   & Profit
  & 
  & Inventory      & Multi-chain  & Native &
25th & 50th & 75th \\

\midrule
ARB-ETH & \SI{243611503.97}{} & \SI{1640759.86}{} & \SI{11757.31}{} & \SI{79.19}{} & \SI{20720}{} & \SI{50.77}{} & \SI{29.05}{} & \SI{20.17}{} & \SI{12.00}{} & \SI{105.00}{} & \SI{369.00}{} \\
BASE-ETH & \SI{136809634.57}{} & \SI{1671060.39}{} & \SI{6709.64}{} & \SI{81.95}{} & \SI{20390}{} & \SI{54.70}{} & \SI{20.76}{} & \SI{24.55}{} & \SI{14.00}{} & \SI{194.00}{} & \SI{318.00}{} \\
BSC-ETH & \SI{106614309.82}{} & \SI{1447488.64}{} & \SI{7359.82}{} & \SI{99.92}{} & \SI{14486}{} & \SI{86.52}{} & \SI{13.48}{} & \SI{0.00}{} & \SI{9.00}{} & \SI{97.00}{} & \SI{285.00}{} \\
ETH-POL & \SI{103521335.22}{} & \SI{1473049.96}{} & \SI{3760.17}{} & \SI{53.51}{} & \SI{27531}{} & \SI{56.41}{} & \SI{2.50}{} & \SI{41.10}{} & \SI{188.50}{} & \SI{1251.00}{} & \SI{1888.00}{} \\
ARB-BASE & \SI{43308426.15}{} & \SI{276184.86}{} & \SI{2556.43}{} & \SI{16.30}{} & \SI{16941}{} & \SI{68.18}{} & \SI{31.82}{} & \SI{0.00}{} & \SI{2.00}{} & \SI{28.00}{} & \SI{130.00}{} \\
BASE-OP & \SI{42010397.87}{} & \SI{232645.08}{} & \SI{2375.75}{} & \SI{13.16}{} & \SI{17683}{} & \SI{57.44}{} & \SI{42.54}{} & \SI{0.02}{} & \SI{10.00}{} & \SI{66.00}{} & \SI{218.00}{} \\
ARB-BSC & \SI{34239480.03}{} & \SI{303282.66}{} & \SI{1331.81}{} & \SI{11.80}{} & \SI{25709}{} & \SI{65.55}{} & \SI{34.45}{} & \SI{0.00}{} & \SI{2.00}{} & \SI{7.00}{} & \SI{74.00}{} \\
ARB-POL & \SI{31883971.14}{} & \SI{167111.54}{} & \SI{2310.77}{} & \SI{12.11}{} & \SI{13798}{} & \SI{82.54}{} & \SI{17.44}{} & \SI{0.01}{} & \SI{5.00}{} & \SI{8.00}{} & \SI{190.00}{} \\
ARB-OP & \SI{25277935.69}{} & \SI{133137.93}{} & \SI{2019.65}{} & \SI{10.64}{} & \SI{12516}{} & \SI{60.17}{} & \SI{39.83}{} & \SI{0.00}{} & \SI{19.00}{} & \SI{89.00}{} & \SI{403.00}{} \\
AVA-ETH & \SI{15546710.75}{} & \SI{174235.31}{} & \SI{8152.44}{} & \SI{91.37}{} & \SI{1907}{} & \SI{83.69}{} & \SI{16.31}{} & \SI{0.00}{} & \SI{9.00}{} & \SI{17.00}{} & \SI{247.50}{} \\
ETH-OP & \SI{13165677.85}{} & \SI{233166.07}{} & \SI{4246.99}{} & \SI{75.21}{} & \SI{3100}{} & \SI{28.48}{} & \SI{8.52}{} & \SI{63.00}{} & \SI{84.00}{} & \SI{102.00}{} & \SI{154.00}{} \\
ARB-SCROLL & \SI{12323983.83}{} & \SI{8504.15}{} & \SI{10877.30}{} & \SI{7.51}{} & \SI{1133}{} & \SI{99.47}{} & \SI{0.53}{} & \SI{0.00}{} & \SI{3.00}{} & \SI{5.00}{} & \SI{18.00}{} \\
ARB-AVA & \SI{11337989.28}{} & \SI{299109.72}{} & \SI{1483.25}{} & \SI{39.13}{} & \SI{7644}{} & \SI{42.61}{} & \SI{57.39}{} & \SI{0.00}{} & \SI{9.00}{} & \SI{38.00}{} & \SI{81.00}{} \\
BASE-POL & \SI{9875258.02}{} & \SI{78817.79}{} & \SI{1316.17}{} & \SI{10.50}{} & \SI{7503}{} & \SI{81.01}{} & \SI{18.75}{} & \SI{0.24}{} & \SI{6.00}{} & \SI{117.00}{} & \SI{416.50}{} \\
BSC-POL & \SI{9109320.4}{} & \SI{183862.16}{} & \SI{376.82}{} & \SI{7.61}{} & \SI{24174}{} & \SI{93.34}{} & \SI{6.65}{} & \SI{0.00}{} & \SI{4.00}{} & \SI{6.00}{} & \SI{61.00}{} \\
BASE-BSC & \SI{8374722.52}{} & \SI{94455.39}{} & \SI{1024.06}{} & \SI{11.55}{} & \SI{8178}{} & \SI{72.57}{} & \SI{27.43}{} & \SI{0.00}{} & \SI{3.00}{} & \SI{53.00}{} & \SI{201.75}{} \\
AVA-BSC & \SI{6586964.25}{} & \SI{74915.78}{} & \SI{792.66}{} & \SI{9.02}{} & \SI{8310}{} & \SI{86.75}{} & \SI{13.25}{} & \SI{0.00}{} & \SI{3.00}{} & \SI{5.00}{} & \SI{47.00}{} \\
ETH-SCROLL & \SI{3674739.91}{} & \SI{3374.46}{} & \SI{34343.36}{} & \SI{31.54}{} & \SI{107}{} & \SI{71.03}{} & \SI{24.30}{} & \SI{4.67}{} & \SI{20.50}{} & \SI{27.00}{} & \SI{369.50}{} \\
OP-POL & \SI{3424785.64}{} & \SI{21126.36}{} & \SI{1105.48}{} & \SI{6.82}{} & \SI{3098}{} & \SI{45.80}{} & \SI{53.58}{} & \SI{0.61}{} & \SI{168.00}{} & \SI{382.00}{} & \SI{807.75}{} \\
AVA-POL & \SI{2111453.94}{} & \SI{35268.49}{} & \SI{693.42}{} & \SI{11.58}{} & \SI{3045}{} & \SI{84.60}{} & \SI{15.40}{} & \SI{0.00}{} & \SI{7.00}{} & \SI{41.00}{} & \SI{264.00}{} \\
BSC-OP & \SI{1530040.17}{} & \SI{54319.71}{} & \SI{1289.00}{} & \SI{45.76}{} & \SI{1187}{} & \SI{57.20}{} & \SI{42.80}{} & \SI{0.00}{} & \SI{7.00}{} & \SI{105.00}{} & \SI{642.00}{} \\
OP-SCROLL & \SI{1166887.32}{} & \SI{1450.7}{} & \SI{10705.39}{} & \SI{13.31}{} & \SI{109}{} & \SI{100.00}{} & \SI{0.00}{} & \SI{0.00}{} & \SI{3.00}{} & \SI{7.00}{} & \SI{20.00}{} \\
AVA-BASE & \SI{1065823.0}{} & \SI{20944.82}{} & \SI{666.97}{} & \SI{13.11}{} & \SI{1598}{} & \SI{61.45}{} & \SI{38.55}{} & \SI{0.00}{} & \SI{6.00}{} & \SI{46.00}{} & \SI{113.00}{} \\
AVA-OP & \SI{540371.31}{} & \SI{7904.1}{} & \SI{1382.02}{} & \SI{20.22}{} & \SI{391}{} & \SI{40.15}{} & \SI{59.85}{} & \SI{0.00}{} & \SI{58.50}{} & \SI{138.00}{} & \SI{562.00}{} \\
ARB-ZKSYNC & \SI{456054.82}{} & \SI{1193.76}{} & \SI{1646.41}{} & \SI{4.31}{} & \SI{277}{} & \SI{36.82}{} & \SI{63.18}{} & \SI{0.00}{} & \SI{23.00}{} & \SI{176.00}{} & \SI{380.00}{} \\
POL-SCROLL & \SI{429752.1}{} & \SI{779.99}{} & \SI{9767.09}{} & \SI{17.73}{} & \SI{44}{} & \SI{97.73}{} & \SI{2.27}{} & \SI{0.00}{} & \SI{15.00}{} & \SI{26.00}{} & \SI{283.50}{} \\
ETH-ZKSYNC & \SI{138119.84}{} & \SI{164.84}{} & \SI{6005.21}{} & \SI{7.17}{} & \SI{23}{} & \SI{30.43}{} & \SI{43.48}{} & \SI{26.09}{} & \SI{353.00}{} & \SI{657.00}{} & \SI{1259.50}{} \\
BSC-ZKSYNC & \SI{129435.05}{} & \SI{8571.36}{} & \SI{1307.42}{} & \SI{86.58}{} & \SI{99}{} & \SI{31.31}{} & \SI{68.69}{} & \SI{0.00}{} & \SI{108.00}{} & \SI{154.00}{} & \SI{383.50}{} \\
BASE-SCROLL & \SI{116708.87}{} & \SI{246.81}{} & \SI{2714.16}{} & \SI{5.74}{} & \SI{43}{} & \SI{74.42}{} & \SI{25.58}{} & \SI{0.00}{} & \SI{18.00}{} & \SI{26.00}{} & \SI{102.50}{} \\
BASE-ZKSYNC & \SI{115157.55}{} & \SI{3615.37}{} & \SI{170.10}{} & \SI{5.34}{} & \SI{677}{} & \SI{33.83}{} & \SI{66.17}{} & \SI{0.00}{} & \SI{154.00}{} & \SI{301.00}{} & \SI{575.00}{} \\
BSC-SCROLL & \SI{71601.75}{} & \SI{1053.71}{} & \SI{778.28}{} & \SI{11.45}{} & \SI{92}{} & \SI{84.78}{} & \SI{15.22}{} & \SI{0.00}{} & \SI{121.00}{} & \SI{172.00}{} & \SI{213.75}{} \\
OP-ZKSYNC & \SI{58086.12}{} & \SI{1.11}{} & \SI{3227.01}{} & \SI{0.06}{} & \SI{18}{} & \SI{66.67}{} & \SI{33.33}{} & \SI{0.00}{} & \SI{10.00}{} & \SI{17.50}{} & \SI{171.50}{} \\
SCROLL-ZKSYNC & \SI{15231.91}{} & \SI{-2.73}{} & \SI{3807.98}{} & \SI{-0.68}{} & \SI{4}{} & \SI{100.00}{} & \SI{0.00}{} & \SI{0.00}{} & \SI{11.50}{} & \SI{12.00}{} & \SI{14.75}{} \\
\bottomrule
\end{tabular}
\caption{Cross-chain arbitrage statistics for all chain-pairs in our dataset.}
\label{tab:all_chain_pairs_metrics}
\end{table*}

\begin{table}[hbtp]
\centering
\footnotesize
\begin{tabular}{lrrrrr}
\toprule
\textbf{Token Pair} & \multicolumn{2}{c}{\textbf{Total [USD]}} & \multicolumn{2}{c}{\textbf{Average [USD]}} & \textbf{Count} \\
\cmidrule(lr){2-3} \cmidrule(lr){4-5}
& Volume & Profit & Volume & Profit & \\
\midrule
BTC-ETH & \SI{94481876.03}{} & \SI{115151.06}{} & \SI{31368.48}{} & \SI{38.23}{} & \SI{3012}{} \\
MIM-USD & \SI{53992514.62}{} & \SI{113038.14}{} & \SI{43263.23}{} & \SI{90.58}{} & \SI{1248}{} \\
NMT-USD & \SI{47672408.90}{} & \SI{486996.21}{} & \SI{28822.50}{} & \SI{294.44}{} & \SI{1654}{} \\
DMT-ETH & \SI{46490116.88}{} & \SI{306793.02}{} & \SI{12269.76}{} & \SI{80.97}{} & \SI{3789}{} \\
ETH-OMNI & \SI{32253198.96}{} & \SI{403176.99}{} & \SI{2627.98}{} & \SI{32.85}{} & \SI{12273}{} \\
ETH-PRIME & \SI{30026851.42}{} & \SI{160867.88}{} & \SI{19334.74}{} & \SI{103.59}{} & \SI{1553}{} \\
USD-agEUR & \SI{19521861.60}{} & \SI{219876.53}{} & \SI{5894.28}{} & \SI{66.39}{} & \SI{3312}{} \\
ETH-NPC & \SI{18818700.27}{} & \SI{147817.12}{} & \SI{5162.88}{} & \SI{40.55}{} & \SI{3645}{} \\
SHRAP-USD & \SI{15426515.75}{} & \SI{112919.70}{} & \SI{5352.71}{} & \SI{39.18}{} & \SI{2882}{} \\
ETH-USD & \SI{15223335.86}{} & \SI{27860.14}{} & \SI{5137.81}{} & \SI{9.40}{} & \SI{2963}{} \\
\bottomrule
\end{tabular}
\caption{Summary statistics for the ten token pairs with the highest cumulative arbitrage volume.}
\label{tab:top_token_pairs_stats}
\end{table}

\begin{table}[hbtp]
\centering
\footnotesize
\begin{tabular}{lrrrr}
\toprule
\textbf{Bridge} &
\textbf{Count} &
\multicolumn{2}{c}{\textbf{Averages}} &
\textbf{Cost Ratio [\%]} \\
\cmidrule(lr){3-4}
&  &
Duration [s] & Cost [USD] &
Bridge Fee/Tx Fee \\
\midrule

LayerZero & \SI{39536}{} & \SI{165.56}{} & \SI{1.64}{} & \SI{26.86}{} \\
Native & \SI{22506}{} & \SI{739.40}{} & \SI{4.91}{} & \SI{30.29}{} \\
Wormhole & \SI{8566}{} & \SI{990.68}{} & \SI{2.03}{} & \SI{29.63}{} \\
Axelar & \SI{3212}{} & \SI{1364.93}{} & \SI{0.45}{} & \SI{21.51}{} \\
Celer & \SI{2815}{} & \SI{281.21}{} & \SI{0.66}{} & \SI{24.55}{} \\
Across & \SI{2639}{} & \SI{269.99}{} & \SI{0.24}{} & \SI{18.83}{} \\
Synapse & \SI{472}{} & \SI{391.28}{} & \SI{0.94}{} & \SI{22.59}{} \\
Stargate & \SI{315}{} & \SI{133.66}{} & \SI{0.93}{} & \SI{26.07}{} \\
\bottomrule
\end{tabular}
\caption{Summary statistics for bridge protocols used in $\geq$ \SI{0.1}{\%} of bridge-based cross-chain arbitrages.}
\label{tab:bridge_protocol_stats}
\end{table}

\begin{table}[b!]
\centering
\footnotesize     
\resizebox{\columnwidth}{!}{%
\begin{tabular}{lrrrrrrrr}
\toprule
\textbf{Metric} & \textbf{Pre} & \textbf{Post} & $\Delta$ & \textbf{95\% CI} & $t$ & $df$ & $p$ & $d$ \\
\midrule
Fee (USD)    & 10 & 6.57 & -3.44 & $[-4.68,-2.19]$ & 5.42 & 364 & 1.08e-7 & -0.56 \\
Count        & 79.4 & 175 & 95.4 & $[84.5,106]$ &  -17.3 & 341 & 9.44e-49 & 1.78 \\
Volume (USD) & 3.57e5 & 1.3e6 & 9.38e5 & $[0.8,1.07]\!\times\!10^{6}$ & -14 & 238 & 6.52e-33 & 1.52 \\
Volume (\%) & 20.3\% & 39.7\% & 19.3\% & $[17.3, 21.4]\%$ & -18.6 & 340 & 8.14e-54 & 1.96 \\
\bottomrule
\end{tabular}}
\caption{Arbitrageur \texttt{0xCA74}'s daily averages for cross-chain arbitrage fee, trade count, and volume before and after the Dencun upgrade (March 13, 2024), with mean change $\Delta$, 95\% \glspl{ci}, Welch $t$, $p$, and Cohen’s $d$.}
\label{tab:top_searcher_dencun_change}
\end{table}

\begin{table*}[t!]
\centering
\footnotesize
\begin{tabular}{llrrrrrrrrrr}
\toprule
\multicolumn{2}{c}{\textbf{Arbitrageur}} &
\multicolumn{2}{c}{\textbf{Totals [USD]}} &
\multicolumn{2}{c}{\textbf{Averages [USD]}} &
\textbf{Count} &
\multicolumn{2}{c}{\textbf{Execution Method [\%]}} &
\multicolumn{3}{c}{\textbf{Settlement Time [s]}} \\
\cmidrule(ll){1-2}
\cmidrule(lr){3-4}
\cmidrule(lr){5-6}
\cmidrule(lr){8-9}
\cmidrule(lr){10-12}
Address & Type &
Volume & Profit & 
Volume & Profit &
&
Inventory & Bridge &
25th & 50th & 75th \\
\midrule
\href{https://etherscan.io/address/0xca74f404e0c7bfa35b13b511097df966d5a65597}{0xCA74} & EOA & \SI{293534534.50}{} & \SI{2043570.41}{} & \SI{6433.77}{} & \SI{44.79}{} & \SI{45624}{} & \SI{68.00}{} & \SI{32.00}{} & \SI{6.00}{} & \SI{64.00}{} & \SI{528.00}{} \\
\href{https://etherscan.io/address/0xa311f7ca3eb2fb98123a807f9b8e4bbbdbdcb2ee}{0xA311}$^{\dagger}$ & SC & \SI{64089695.72}{} & \SI{578432.65}{} & \SI{5156.46}{} & \SI{46.54}{} & \SI{12429}{} & \SI{50.90}{} & \SI{49.10}{} & \SI{3.00}{} & \SI{27.00}{} & \SI{58.00}{} \\
\href{https://etherscan.io/address/0x4cb6f0ef0eeb503f8065af1a6e6d5dd46197d3d9}{0x4CB6} & EOA & \SI{59428683.84}{} & \SI{400555.17}{} & \SI{2745.10}{} & \SI{18.50}{} & \SI{21649}{} & \SI{73.90}{} & \SI{26.10}{} & \SI{2.00}{} & \SI{6.00}{} & \SI{38.00}{} \\
\href{https://etherscan.io/address/0x0a6c69327d517568e6308f1e1cd2fd2b2b3cd4bf}{0x0A6C}$^{\dagger}$ & SC & \SI{26573871.45}{} & \SI{282973.40}{} & \SI{2414.49}{} & \SI{25.71}{} & \SI{11006}{} & \SI{21.69}{} & \SI{78.31}{} & \SI{134.00}{} & \SI{228.00}{} & \SI{575.00}{} \\
\href{https://etherscan.io/address/0x9b9019d7faf9857d9643e65a15327a755042884b}{0x9B90} & EOA & \SI{21928158.44}{} & \SI{241282.77}{} & \SI{14407.46}{} & \SI{158.53}{} & \SI{1522}{} & \SI{64.91}{} & \SI{35.09}{} & \SI{9.00}{} & \SI{13.00}{} & \SI{60.00}{} \\
\href{https://etherscan.io/address/0x55cd38a871dbc474b07d7f6a0a1fe0eabd3155cd}{0x55CD} & EOA & \SI{21378753.99}{} & \SI{12090.77}{} & \SI{194352.31}{} & \SI{109.92}{} & \SI{110}{} & \SI{19.09}{} & \SI{80.91}{} & \SI{89.00}{} & \SI{168.00}{} & \SI{281.75}{} \\
\href{https://etherscan.io/address/0x612ed65a88bfa101e6c44727919b6f81df92e428}{0x612E} & EOA & \SI{14721732.01}{} & \SI{101750.09}{} & \SI{23554.77}{} & \SI{162.80}{} & \SI{625}{} & \SI{100.00}{} & \SI{0.00}{} & \SI{79.00}{} & \SI{117.00}{} & \SI{138.00}{} \\
\href{https://etherscan.io/address/0x48cce57c4d2dbb31eaf79575abf482bbb8dc071d}{0x48CC}  & EOA &  \SI{14495989.50}{} & \SI{10637.47}{} & \SI{35270.05}{} & \SI{25.88}{} & \SI{411}{} & \SI{100.00}{} & \SI{0.00}{} & \SI{17.00}{} & \SI{21.00}{} & \SI{25.00}{} \\
\href{https://etherscan.io/address/0x882d04c3d8410ddf2061b3cba2c3522854316feb}{0x882D} & SC &  \SI{11252232.57}{} & \SI{189184.18}{} & \SI{3436.85}{} & \SI{57.78}{} & \SI{3274}{} & \SI{53.12}{} & \SI{46.88}{} & \SI{1138.00}{} & \SI{1345.00}{} & \SI{2299.00}{} \\
\href{https://etherscan.io/address/0xe83f75907fb4c575414fa6f5cfe8cef24dc5870c}{0xE83F} & EOA & \SI{10276396.37}{} & \SI{81094.31}{} & \SI{4336.03}{} & \SI{34.22}{} & \SI{2370}{} & \SI{45.02}{} & \SI{54.98}{} & \SI{82.00}{} & \SI{232.00}{} & \SI{936.25}{} \\
\href{https://etherscan.io/address/0x622661ab4b6ab93c659e751f47ebb0c6e6ad9f48}{0x6226} & EOA & \SI{9664552.84}{} & \SI{141303.17}{} & \SI{246.36}{} & \SI{3.60}{} & \SI{39229}{} & \SI{99.99}{} & \SI{0.01}{} & \SI{2.00}{} & \SI{4.00}{} & \SI{6.00}{} \\
\bottomrule
\multicolumn{4}{l}{$^{\dagger}$EOA identified via its SC.}
\end{tabular}
\caption{Cross-chain arbitrage metrics for arbitrageurs with $\geq$\SI{1}{\%} of the total trade volume. EOA - Externally Owned Account, SC - Smart Contract.}
\label{tab:arber_metrics}
\end{table*}

\end{document}